\begin{document}


\setlength{\baselineskip}{24pt}
\begin{center}
{\textbf{\large An Equilibrium Analysis of the Arad-Rubinstein Game*
}}
\medskip


\qquad{Christian Ewerhart$^\dagger$} \qquad \qquad {Stanisław Kaźmierowski$^\ddagger$}

\smallskip
\today
\end{center}


\noindent \textbf{Abstract.} Colonel Blotto games with discrete strategy spaces effectively illustrate the intricate nature of multidimensional strategic reasoning. This paper studies the equilibrium set of such games where, in line with prior experimental work, the tie-breaking rule is allowed to be flexible. We begin by pointing out that equilibrium constructions known from the literature extend to our class of games. However, we also note that, irrespective of the tie-breaking rule, the equilibrium set is excessively large. Specifically, any pure strategy that allocates at most twice the fair share to each battlefield is used with positive probability in some equilibrium. Furthermore, refinements based on the elimination of weakly dominated strategies prove ineffective. To derive specific predictions amid this multiplicity, we compute strategies resulting from long-run adaptive learning. 


\smallskip
\noindent \textbf{Keywords.} Colonel Blotto games, multidimensional strategic reasoning, tie-breaking rules, Nash equilibrium, dominated strategies, adaptive learning
\smallskip

\noindent \textbf{JEL codes.} C72 Noncooperative games; C91 Laboratory, individual behavior; D74 Conflict, conflict resolution, alliances, revolutions
\thispagestyle{empty}
\bigskip

{\footnotesize \setlength{\baselineskip}{18pt} \noindent *) This paper has benefited from valuable comments and suggestions by Marcin Dziubiński, Dan Kovenock, Brian Roberson, and attendants of the session ``Contests I'' at the SAET Conference in Santiago de Chile. The second-named author thanks the Department of Economics of the University of Zurich for its hospitality. The work has been supported by the Polish National Science Center through grant 2018/29/B/ST6/00174. This manuscript reflects copy-edit suggestions put forth by ChatGPT 3.5.

\smallskip\noindent $\dagger$) Department of Economics, University of Zurich; christian.ewerhart@econ.uzh.ch.

\smallskip\noindent $\ddagger$) Department of Computer Science, University of Warsaw; s.kazmierowski@uw.edu.pl.

}

\newtheorem{definition}{Definition}
\newtheorem{theorem}{Theorem}
\newtheorem{proposition}{Proposition}
\newtheorem{observation}{Observation}
\newtheorem{lemma}{Lemma}
\newtheorem{example}{Example}
\newtheorem{assumption}{Assumption}
\newtheorem{corollary}{Corollary}
\newtheorem*{example2}{Example 2 (continued)}
\newtheorem*{lemma1}{Lemma 1 (extension)}
\makeatletter
\renewenvironment{proof}[1][\proofname]{%
  \par\pushQED{\qed}\normalfont%
  \topsep6\p@\@plus6\p@\relax
  \trivlist\item[\hskip\labelsep\bfseries#1\@addpunct{.}]%
  \ignorespaces
}{%
  \popQED\endtrivlist\@endpefalse
}

\newcommand{\pure}{s}
\newcommand{\aPure}{\pure^{A}}
\newcommand{\bPure}{\pure^{B}}
\newcommand{\purePair}{ \aPure, \bPure }
\newcommand{\pureProfile}{(\purePair)}
\newcommand{\purePayoff}{\mathrm{\pi}}
\newcommand{\mixedPayoff}{\mathrm{\pi}}
\newcommand{\dist}{\mathrm{\Delta}}

\newpage
\setlength{\baselineskip}{24pt}
\noindent\textbf{1. Preliminaries}
\medskip

\noindent\textit{1.1 Introduction} 

\noindent In a Colonel Blotto game, as envisaged by Borel (\citeyear{borel}), two adversaries are tasked with allocating their budgets of a resource secretly over a given set of battlefields, aiming to conquer as many battlefields as possible. In each battlefield, victory is awarded to the adversary allocating a higher amount of the resource, where a tie-breaking rule is invoked when both parties assign an equal amount. Applications of Colonel Blotto games extend beyond military conflicts to areas such as strategic marketing, electoral competition, innovation contests, and network security. Key contributions in the literature, including Roberson (2006) for continuous and Hart (2008) for discrete strategy spaces, assumed that the Colonel Blotto game is \textit{constant-sum at ties}, i.e., that any tied battlefield is ultimately conquered by one contestant or another, but never lost. This characteristic of the standard model ensures that the Colonel Blotto game is a two-person constant-sum game, significantly simplifying the equilibrium analysis. Notably, however, the assumption is not satisfied in experimental studies such as Arad and Rubinstein's (2012) investigation of multidimensional strategic reasoning. In their case, tied battlefields count as lost for both parties, rendering the standard equilibrium characterization invalid.\footnote{See Arad and Rubinstein (2012, p.~584) for their statement: ``We are not aware of any analysis of the Nash equilibria of our version of the game.''}

The present study investigates a class of Colonel Blotto games characterized by discrete strategy spaces and flexible tie-breaking rules. Our model nests both the constant-sum version of the game and its variant where tied battlefields generate no value. We present an approach that allows constructing Nash equilibria in this wider class of games. In particular, we find an equilibrium in the example considered by Arad and Rubinstein (2012). However, a caveat emerges. Specifically, our analysis reveals an excessively large equilibrium set within the model. 
Furthermore, attempts to narrow the set of equilibria down by applying the concept of weak dominance prove ineffective. To derive specific predictions nevertheless, we explore the implications of long-run adaptive learning.

\bigskip
\noindent\textit{1.2 Contribution} 

\noindent The present study contributes to the literature on finite Blotto games in two main ways. Firstly, we construct Nash equilibria in the analyzed class of games. For the constant-sum game, Hart~(\citeyear{hart}) demonstrated that marginal distributions on individual battlefields can be chosen to be essentially uniform. Consequently, assuming divisibility, players can partition the set of battlefields into pairs, evenly distribute their resources, and randomize uniformly within each pair. We observe that the precise form of the tie-breaking rule is less crucial, as the probability of a tie remains constant across bid levels used in equilibrium. We also point out that the loss of payoffs due to ties, when compared to the constant-sum version, is insufficient to make overbidding an attractive strategy. Based on these observations, we establish equilibria in the model with flexible tie-breaking, specifically in the Arad-Rubinstein game. In doing so, we identify conditions under which the equilibrium initially identified by Hart~(2008) persists in games with modified tie-breaking.

Our second main contribution is the observation that irrespective of the tie-breaking rule, the equilibrium set of games in the considered class is excessively large. More specifically, we show that \textit{any} pure strategy that does not put too many resources on any individual battlefield is part of
some equilibrium strategy. 
The idea of identifying pure strategies that are part of some equilibrium arises already in Tukey (\citeyear{tukey1949problem}), who termed those strategies as ``good.'' We find that any pure strategy that allocates at most twice the fair share of the budget to each battlefield is good in this sense.\footnote{Conversely, strategies that concentrate the resource on too few battlefields are never good.}
Moreover, none of these strategies can be eliminated by equilibrium refinements based on the elimination of weakly dominated strategies. As a remedy to this problem, we propose simulating long-run adaptive learning, which indeed yields definite predictions commensurate with experimental data.\footnote{The analysis also leads to a number of supplementary observations on the constant-sum model, which we decided to report upon in a separate section before the conclusion.}

\bigskip
\noindent\textit{1.3 Related literature} 

\noindent In his seminal paper on Colonel Blotto games, Borel (\citeyear{borel}) considered both continuous and finite strategy spaces. For the model with continuous strategy spaces, Roberson~(\citeyear{roberson}) characterized optimal marginal distributions and unique equilibrium payoffs. He also noted that in the continuous model, tie-breaking rules often do not matter (but may need modification to ensure existence). 

For the model with discrete strategy spaces, Hart (\citeyear{hart}) constructed equilibria not only in all cases with homogeneous endowments, but also for special cases with heterogeneous endowments. He used two auxiliary models. In the \textit{Colonel Lotto game}, players can be thought of as being restricted to mixed strategies that are invariant under arbitrary permutations of the set of battlefields. In the \textit{General Lotto game}, the budget constraint needs to be satisfied in expectation only. Solutions of the General Lotto game turn into solutions of the corresponding Colonel Lotto and Colonel Blotto games provided that those solutions are feasible, i.e., marginals can be derived from a joint distribution that satisfies the budget constraint with equality. 
Building on these concepts, Dziubiński~(\citeyear{dziubinski_1}, \citeyear{dziubinski_2}) characterized the set of optimal marginal distributions in the General Lotto game and, provided that the number of resources is divisible by the number of battlefields, also the set of optimal marginal distributions of Colonel Blotto games with discrete strategy spaces. Despite this progress, the general characterization of the equilibrium set in Colonel Blotto games with discrete strategy spaces has remained elusive. More recently, Liang et al.~(\citeyear{liang}) and Aspect and Ewerhart~(\citeyear{aspect2022colonel}) characterized equilibria in discrete Colonel Blotto games with two battlefields. Prior work has not attempted to characterize the equilibrium in a discrete Colonel Blotto game with a modified tie-breaking rule.\footnote{However, Rapoport and Almadoss (\citeyear{rapoport2000mixed}) and Dechenaux et al.\ (\citeyear{dechenaux2006caps}) have considered an all-pay auction with discrete bids and a modified tie-breaking rule.} 

Experimental tests of Blotto games not mentioned above include Avrahami and Kareev (\citeyear{avrahami2009weak}), Kohli et al.~(\citeyear{kohli2012colonel}), Chowdhury et al.~(\citeyear{chowdhury2013experimental}), Avrami et al.~(\citeyear{avrahami2014allocation}), and Montero et al.~(\citeyear{montero2016majoritarian}), among others.


\bigskip
\noindent\textit{1.4 Overview of the paper}

\noindent The structure of this paper is as follows. Section~2 introduces the model. In Section~3, we present an approach for constructing Nash equilibria within the analyzed class of games. Section~4 concerns the equilibrium set. Section~5 deals with refinements, while Section~6 reports on the simulation of adaptive learning utilizing a high-performance computing environment. Section~7 discusses extensions. Our supplementary findings regarding the standard model are compiled in Section 8. Section~9 concludes the paper. Technical proofs are provided in an appendix.

\newpage
\noindent \textbf{2. The model}
\medskip

\noindent \textit{2.1 Setup and notation}

\noindent Two players, denoted by ${A}$ and ${B}$, each allocate a total of $N \geq 1$ units of a resource over $K \geq 2$ battlefields.\footnote{In the excluded cases where $N=0$ or $K=1$, players have only one strategy. However, note that we allow for $K=2$, which is not necessarily trivial in our model.} Units of the resource are not divisible.
Hence, a \textit{pure strategy} of player $i$ is a vector
\begin{equation}
   s^i= \left( 
\begin{matrix}
s^i_1 \\ 
\vdots \\
s^i_K%
\end{matrix}%
\right)\text{,}\notag
\end{equation}
such that $s_i^k\in\{0,1,\ldots,N\}$ for every $k\in\{1,\ldots,K\}$, and 
\begin{equation}
 \sum^K_{k=1} s_k^i = N.\notag   
\end{equation}
The set of strategies for players $A$ and $B$ is identical and denoted by $S$.
As usual, we refer to the opponent of player $i$ by $-i$. For a given pure strategy profile $(s^{i}, s^{-i})  \in S \times S$, the payoff of player $i \in \{{A}, {B}\}$ in the Colonel Blotto game is defined by
\begin{equation}\label{eq:payoffs_blotto}
    \purePayoff^{i}\left(\pure^{i}, \pure^{-i} \right) = \sum^K_{k=1} \left( \mathds{1}_{s_k^{i} > s_k^{-i}} + \frac{\alpha}{2} \cdot \mathds{1}_{s_k^{i} = s_k^{-i}} \right),\notag
\end{equation}
where $\mathds{1}_{s_k^{i} > s_k^{-i}}$ equals one if player $i$'s bid in battlefield $k$ exceeds that of player $-i$, and zero otherwise, $\mathds{1}_{s_k^{i} = s_k^{-i}}$ equals one in the case of a tie on battlefield $k$, and zero otherwise, and $\alpha$ is a parameter. 
The departure from the standard model is the introduction of flexible tie-breaking, represented by $\alpha$. 
We call the two-player game with strategy sets and payoffs defined as above the \textit{Colonel Blotto game} $\mathcal{B}_{\alpha}\equiv\mathcal{B}_{\alpha}(N,K)$. 

\bigskip
\noindent \textit{2.2 Examples}

\noindent Below, we recall two examples of finite Colonel Blotto games that have been considered in the literature.

\begin{example} [\textbf{Hart, \citeyear{hart}}]
In the standard version of the Colonel Blotto game, $\mathcal{B}_{1}(N,K)$, player $i$'s payoff is defined as
\begin{equation}
    \purePayoff^{i}\left(\pure^{i}, \pure^{-i} \right)= \sum^K_{k=1} \left( \mathds{1}_{s_k^{i} > s_k^{-i}} + \frac{1}{2} \cdot \mathds{1}_{s_k^{i} = s_k^{-i}} \right).\notag
\end{equation}
\end{example}

\bigskip
\noindent The constant-sum setup underlying Example 1 has indeed been prevalent in the literature. Borel (\citeyear{borel}, p.~100) considered a particular case where $N=7$ and $K=3$, a solution of which has been offered by Hart (\citeyear{hart}). The setup in Example 1 has been tested as a symmetric control by Avrahami et al.~(\citeyear{avrahami2014allocation}) for $N\in \{16,24\}$ and $K=8$. 

\medskip
\begin{example}[\textbf{Arad and Rubinstein, \citeyear{arad_rubinstein}}]\label{ex:arad_rubinstein}
Two colonels are asked to distribute a total of $N=120$ units of the resource over a total of $K=6$ battlefields. The payoff of a player is the number of battlefields that she assigned strictly more resources than her opponent. Thus, the game is $\mathcal{B}_0(120,6)$, the common set of strategies is given as $S=\{s^i\in \{0,1,\ldots,120\}^6:\sum^6_{k=1} s_k^i = 120\}$, and the payoff of player $i$ is defined by
\begin{equation}\label{eq:payoffs}
    \purePayoff^{i}\left(\pure^{i}, \pure^{-i} \right) = \sum^6_{k=1} \,\mathds{1}_{s_k^{i} > s_k^{-i}}.\notag
\end{equation} 
\end{example} 

\bigskip
\noindent Abstracting from the fact that the first example keeps $N$ and $K$ flexible, the main difference between the two examples lies in the tie-breaking rule applied. In contrast to Example 1, the game in Example 2 fails to be constant-sum, as the sum of payoffs of both players depends on probabilities of 
ties occurring on individual battlefields. This complicates the equilibrium analysis but also makes the game more interesting. Intuitively, setting $\alpha=0$ provides an additional incentive to outguess the opponent. For instance, in an experiment with $\alpha=1$, subjects might perceive the pure strategy $(20,20,20,20,20,20)^{\prime}$ as a focal point that leads to a fair division.\footnote{Here and below, we represent the strategy as a row vector via transposition.} With $\alpha=0$, however, payoffs at this focal point are zero for both players, i.e., there is a strong incentive to engage in at least \textit{some} additional reasoning. As this consideration is mentioned by Arad and Rubinstein (\citeyear{arad_rubinstein}) early in their work, it might have contributed to their decision to depart from the standard tie-breaking rule. 

\bigskip
\noindent \textit{2.3 Assumptions}

\noindent For expositional reasons, we will initially work under a set of simplifying assumptions. The implications of dropping these assumptions will be discussed in the extensions section. Our first assumption concerns the parity of the number of battlefields.

\begin{assumption}\label{ass:1}
    $K$ is even.
\end{assumption}

\noindent Assumption 1 simplifies the analysis but can often be dropped at the cost of additional arguments. Experimental papers tend to work under the assumption. Our second assumption concerns the relationship between $N$ and $K$.

\begin{assumption}\label{ass:2}
    $N$ is divisible by $K$.
\end{assumption}

\noindent Thus, in the main part of the analysis, we will assume that the number of resources $N$ is a multiple of $K$. It follows from Assumption~2 that 
\begin{equation}
m=\frac{N}{K}  \notag  
\end{equation}
is an integer. For an experimental subject, this means that the uniform allocation that assigns $m$ units of the resource to each of the battlefields is a possibility. E.g., Assumption 2 is violated in Avrahami and Kareev (\citeyear{avrahami2014allocation}), but it holds in Avrahami et al.~{(\citeyear{avrahami2014allocation}}). The parameter $m$ will play an important role in the sequel. The same is true for the efficiency parameter $\alpha$ on which we impose the following assumption.

\begin{assumption}\label{ass:3}
    $\alpha\in [0,2]$.
\end{assumption}

\noindent Assumption 3 may be considered natural but nevertheless restricts the efficiency parameter in two ways. The restriction $\alpha\geq0$ says that ties cannot cause inefficiencies beyond the complete loss of the battlefield value. For $\alpha<0$, players would have a very strong incentive to avoid ties, which may lead to asymmetric equilibria. Conversely, for $\alpha>2$, ties would be ``superefficient,'' so that pure-strategy equilibria become natural. Such possibilities will be further discussed in the extensions section.

Note that Assumptions 1 through 3 hold in the Arad-Rubinstein game. Indeed, in Example 2, $K=6$ is even, $m=\frac{120}{6}=20$ is an integer, and $\alpha=0\in [0,2]$.  

\bigskip
\noindent \textit{2.4 Equilibrium concept}

\noindent Given a finite non-empty set $X$, let $\dist\left(X\right)$ denote the set of all probability distributions on~$X$. We are interested in \textit{mixed-strategy Nash equilibria} of the game, i.e., mixed strategy profiles $\sigma = (\sigma^{i}, \sigma^{-i}) \in \dist(S) \times \dist(S)$ such that no player can improve her expected payoff by unilaterally changing her mixed strategy. By a \textit{symmetric Nash equilibrium strategy}, we mean any mixed strategy $\sigma^i$ such that $\sigma=(\sigma^i,\sigma^i)$ is a mixed-strategy Nash equilibrium. 

A crucial property of Colonel Blotto games is that payoffs are functions of the respective marginal distributions at each battlefield. Given a mixed strategy $\sigma^i$ and a battlefield $k\in\{1,\ldots,K\}$, let $\sigma^i_k$ denote the marginal distribution of $\sigma^i$ at battlefield $k$. Following Hart (\citeyear{hart}), we denote the uniform marginal on $\{0,\dots,2m\}$ by $U^m$. A mixed strategy $\sigma^i$ such that $\sigma^{i}_k =U^m$ for every battlefield $k$ will be said to induce \textit{uniform marginals}. 


\newpage
\noindent \textbf {3. Equilibrium in the Colonel Blotto game with flexible tie-breaking}
\medskip

\noindent In this section, we present a simple approach that allows constructing an equilibrium in the Colonel Blotto game with the flexible tie-breaking rule. As mentioned before, this will lead us to an equilibrium in the Arad-Rubinstein game as well. 

\bigskip
\noindent \textit{3.1 A canonical equilibrium}

\noindent The following result characterizes one particular Nash equilibrium in $\mathcal{B}_{\alpha}(N,K)$.

\begin{proposition}\label{prop:player_symm_NE}
Impose Assumptions 1 through 3. Then, a symmetric equilibrium strategy of $\mathcal{B}_{\alpha}(N,K)$ is given by uniform randomization over the set of pure strategies
\begin{equation}
     S_0 = 
     \left\{ {\left( 
\begin{matrix}
0 \\ 
2m \\ 
\vdots \\
0 \\
2m
\end{matrix}%
\right)} , 
{\left( 
\begin{matrix}
1 \\ 
2m - 1 \\ 
\vdots \\
1 \\
2m - 1
\end{matrix}%
\right)} 
, \ldots ,
{\left( 
\begin{matrix}
2m \\ 
0 \\ 
\vdots \\
2m \\
0
\end{matrix}%
\right)}
\right\}.\notag
\end{equation}
In the resulting equilibrium, players' expected payoffs amount to 
    $\pi^* = K\cdot\frac{m+ \frac{\alpha}{2}}{2m+ 1}$.
\end{proposition}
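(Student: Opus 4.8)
The plan is to exploit the property stressed in Section~2.4 that payoffs depend only on the marginal distributions at each battlefield, which reduces the whole problem to a one-dimensional optimization. First I would rewrite the generic element of $S_0$ as the pure strategy that places $j$ on every odd battlefield and $2m-j$ on every even battlefield, for $j\in\{0,\dots,2m\}$. Uniform randomization over $S_0$, call it $\sigma^*$, then draws $j$ uniformly from $\{0,\dots,2m\}$, so the marginal at an odd battlefield is the law of $j$ and at an even battlefield the law of $2m-j$; both equal $U^m$. Hence $\sigma^*$ induces uniform marginals, and since the game is symmetric it suffices to show that $\sigma^*$ is a best response to itself.

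Second, against an opponent inducing uniform marginals the deviator's expected payoff is additive across battlefields, $\pi=\sum_{k=1}^{K} g(b_k)$, where $b_k$ is the deviator's bid on battlefield $k$ and $g(b)$ is the expected per-battlefield payoff of bidding $b$ against an independent draw from $U^m$. I would compute $g$ explicitly: for $0\le b\le 2m$ a bid wins against the $b$ lower values and ties the one equal value, giving $g(b)=(b+\tfrac{\alpha}{2})/(2m+1)$; for $b\ge 2m+1$ the bid wins for sure with no possible tie, giving $g(b)=1$. Thus $g$ rises linearly with slope $1/(2m+1)$ up to $2m$, jumps to $1$ at $2m+1$, and is flat thereafter.

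Third, and this is where Assumption~3 does the work, I would verify that $g$ is discretely concave, i.e.\ that its forward increments are non-increasing. The increment equals $1/(2m+1)$ on the interior, then $(1-\tfrac{\alpha}{2})/(2m+1)$ at the step from $2m$ to $2m+1$, then $0$. Monotonicity of these increments is \emph{exactly} the requirement $0\le\alpha\le 2$: the condition $\alpha\ge 0$ keeps the overbidding step no larger than the interior slope, while $\alpha\le 2$ keeps it non-negative. Economically, this is the claim from the introduction that the payoff loss from a tie is too small to make overbidding past $2m$ worth the wasted units. The hard part will be this concavity check, since the equilibrium property hinges entirely on it; once it is in hand, the rest is immediate.

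Finally, I would apply the discrete Jensen inequality to the concave piecewise-linear extension of $g$. Because the budget constraint $\sum_k b_k=N=mK$ forces the average bid to equal $m$, concavity yields $\sum_{k=1}^{K} g(b_k)\le K\,g(m)=K\,(m+\tfrac{\alpha}{2})/(2m+1)$ for every feasible deviation. Each strategy in $S_0$ attains this bound with equality, since pairing $j$ with $2m-j$ gives $g(j)+g(2m-j)=(2m+\alpha)/(2m+1)$ on each of the $K/2$ pairs. Therefore $\sigma^*$ secures exactly $\pi^*=K\,(m+\tfrac{\alpha}{2})/(2m+1)$ and no deviation improves upon it, establishing both that $(\sigma^*,\sigma^*)$ is a symmetric Nash equilibrium and the stated payoff.
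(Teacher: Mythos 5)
Your proposal is correct and takes essentially the same route as the paper: show that uniform randomization over $S_0$ induces uniform marginals, compute the per-battlefield payoff $g(b)$ of a bid $b$ against the uniform marginal $U^m$, and bound every deviation via the budget constraint $\sum_k b_k = N = mK$. The only difference is packaging: where you invoke discrete concavity of $g$ and Jensen's inequality, the paper bounds $g(b)\le \left(b+\tfrac{\alpha}{2}\right)/(2m+1)$ directly (this is exactly the supporting line at $b=m$), a bound that needs only $\alpha\ge 0$ and hence, as the paper's footnote observes, extends to $\alpha>2$, where your concavity check would fail.
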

\begin{proof} See the Appendix. 
\end{proof}

\noindent As can be seen, players partition the set of battlefields into pairs. This is possible, of course, because the number of battlefields has been assumed even via Assumption 1.\footnote{We would like to add, however, that Assumption 1 can be dropped, as will be explained in the extensions section.} To each pair of battlefields, a constant number of $2m$ resources is allocated, which is feasible due to Assumption 2. Moreover, the split among the two battlefields in each pair is uniformly distributed, with perfect correlation across pairs. We note that players' strategies induce uniform marginals. 
Indeed, for any battlefield $k$, every number of resources in $\{0,1,2,\ldots, 2m\}$ is assigned to battlefield $k$ with the same probability of $\frac{1}{2m+1}$.

\bigskip
\noindent \textit{3.2 Discussion}

\noindent Hart~(2008) has shown that, under Assumptions 1 and 2, strategies inducing uniform marginals form an equilibrium in the constant-sum game $\mathcal{B}_{1}(N,K)$. The point to note is that this remains the case even with flexible tie-breaking. The reason why the equilibrium property does not break down with more flexible tie-breaking, and this is the main observation that motivated our work on the present paper, is that, given uniform marginals, \textit{the likelihood of getting tied is constant across all bid levels that are used in equilibrium with positive probability}. As a result, the indifference in the standard setup with $\alpha=1$ is not affected by the modification of the payoff functions at ties.

One might wonder if, with the modified tie-breaking in place, players would not have an incentive to bid higher than $2m$ on some of the battlefields. Such an incentive might arise for $\alpha<1$, because the tie-breaking is inefficient in that case. However, overbidding cannot raise a player's payoff. The reason is that any additional unit of the resource required to assign more than $2m$ on some battlefield needs to be taken from some other battlefield, where this lowers the probability of winning by $\frac{1}{2m+1}$. Indeed, every bid level in $\{0,1,\ldots,2m\}$ is used by the opponent on every battlefield with the same probability of $\frac{1}{2m+1}$. On the other hand, the increase in payoff from bidding $2m+1$ instead of $2m$ in a battlefield is $\frac{1}{2m+1}\times(1-\alpha)\leq \frac{1}{2m+1}$.
Therefore, given Assumption 3, or more precisely given that $\alpha\geq 0$, the deviation never yields a strictly higher payoff.\footnote{As the discussion shows, the conclusion of Proposition 1 remains technically true for $\alpha>2$. However, as mentioned before, there is no reason to hide one's strategy for $\alpha\geq 2$, i.e., symmetric pure-strategy equilibria may be more plausible in that case.}

\bigskip
\noindent \textit{3.3 Illustration}

\noindent We illustrate Proposition 1 with an example.

\begin{corollary}[\textbf{Equilibrium in the Arad-Rubinstein game}]\label{cor:unifrom}
The following strategy is a symmetric Nash equilibrium strategy in $\mathcal{B}_0(120,6)$. Both players individually and independently randomize uniformly over the set
\begin{equation}
     S_0= 
     \left\{ {\left( 
\begin{matrix}
0 \\ 
40 \\ 
0 \\ 
40 \\ 
0 \\
40
\end{matrix}%
\right)} , 
{\left( 
\begin{matrix}
1 \\ 
39 \\ 
1 \\ 
39 \\ 
1 \\
39
\end{matrix}%
\right)} 
, \ldots ,
{\left( 
\begin{matrix}
40 \\ 
0 \\ 
40 \\ 
0 \\ 
40 \\
0
\end{matrix}%
\right)}
\right\}\text{.}\notag
\end{equation}
The equilibrium payoff is 
$\pi^*=\frac{120}{41}\approx 2.927$.
\end{corollary}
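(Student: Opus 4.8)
The plan is to obtain the statement as an immediate specialization of Proposition~1. First I would verify that the Arad-Rubinstein game falls within the scope of that result, i.e., that Assumptions~1 through~3 are satisfied. As already noted following Assumption~3, the game $\mathcal{B}_0(120,6)$ has $K=6$ even (Assumption~1), $N=120$ divisible by $K=6$ (Assumption~2), and efficiency parameter $\alpha=0\in[0,2]$ (Assumption~3). Hence Proposition~1 applies directly, and the only remaining work is to instantiate its conclusion at the parameters of Example~2.

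Next I would read off the relevant parameters. With $N=120$ and $K=6$ we have $m=N/K=20$, so that $2m=40$ and the set of battlefields partitions into $K/2=3$ pairs. Substituting these values into the description of the equilibrium set in Proposition~1, the generic element $(0,2m,\ldots,0,2m)^{\prime}$ becomes $(0,40,0,40,0,40)^{\prime}$, the element $(1,2m-1,\ldots)^{\prime}$ becomes $(1,39,1,39,1,39)^{\prime}$, and so on up to $(40,0,40,0,40,0)^{\prime}$. This is precisely the set $S_0$ displayed in the corollary. Uniform randomization over $S_0$ is therefore a symmetric Nash equilibrium strategy of $\mathcal{B}_0(120,6)$, as claimed.

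Finally, the equilibrium payoff follows by evaluating the formula $\pi^*=K\cdot\frac{m+\frac{\alpha}{2}}{2m+1}$ from Proposition~1 at $K=6$, $m=20$, and $\alpha=0$, which yields $\pi^*=6\cdot\frac{20}{41}=\frac{120}{41}\approx 2.927$. Since the corollary is a pure substitution into an already established result, there is no genuine obstacle; the only point requiring care is to confirm that the explicit set $S_0$ in the corollary coincides term-by-term with the abstract set $S_0$ of Proposition~1 under the stated parameter values, which the identification above verifies.
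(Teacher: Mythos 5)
Your proposal is correct and follows exactly the paper's route: the paper proves the corollary simply as ``Immediate from Proposition 1,'' and your write-up spells out the same specialization (verifying Assumptions 1--3, computing $m=20$, and substituting into the set $S_0$ and the payoff formula $\pi^*=K\cdot\frac{m+\frac{\alpha}{2}}{2m+1}=\frac{120}{41}$). Nothing is missing; you have merely made explicit the instantiation that the paper leaves to the reader.
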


\begin{proof}
    Immediate from Proposition 1.
\end{proof}

\noindent Even though the equilibrium characterized by Proposition 1 has a canonical structure (e.g., it has uniform marginals, and is symmetric with respect to permutations of the battlefield pairings and within battlefield pairings), it is not the only equilibrium, as will be shown in the next section.

\bigskip\bigskip
\noindent \textbf{4. Understanding the equilibrium set}
\medskip

\noindent The equilibrium set turns out to be very large. To illustrate this point, we study the support of equilibrium strategies in the present section. We first identify pure strategies that are chosen with positive probability in some mixed-strategy Nash equilibrium (Subsection 4.1). Then, we outline the proof (Subsection 4.2). Finally, we turn to strategies that are never ``good'' (Subsection 4.3).

\bigskip \noindent 
\textit{4.1 Pure strategies that arise in some equilibrium}

\noindent The following result provides an indication about the size of the equilibrium set.

\begin{proposition}\label{prop:universality}
Impose Assumptions 1 through 3. Then, every pure strategy $s^i$ such that $s^i_k \leq \frac{2N}{K}$ for every battlefield $k$, is used with positive probability in some equilibrium strategy of $\mathcal{B}_{\alpha}(N,K)$.
\end{proposition}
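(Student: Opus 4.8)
The plan is to reduce the statement to a single combinatorial construction and then carry that construction out. First I would compute the payoff of an \emph{arbitrary} pure strategy $s$ against an opponent who induces uniform marginals $U^m$. Since in each battlefield the opponent's bid is uniform on $\{0,\ldots,2m\}$, a bid $s_k\le 2m$ earns $(s_k+\tfrac{\alpha}{2})/(2m+1)$, while a bid $s_k\ge 2m+1$ wins for sure and earns $1$. Summing over battlefields and using $\sum_k s_k=N=Km$, any allocation all of whose coordinates are at most $2m$ earns exactly $\pi^*=K\frac{m+\alpha/2}{2m+1}$; moreover, because $\alpha\ge 0$ makes the per-battlefield payoff separably concave in each coordinate, no allocation earns strictly more. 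Hence the set of pure best responses to any uniform-marginal strategy is precisely the set of allocations with all coordinates $\le 2m$, and each such best response attains the value $\pi^*$.

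The key consequence I would exploit is that if $\tau\in\Delta(S)$ is \emph{any} mixed strategy inducing uniform marginals with $s^*$ in its support, then the symmetric profile $(\tau,\tau)$ is a Nash equilibrium using $s^*$ with positive probability. Indeed, a uniform marginal puts zero weight on any bid exceeding $2m$, so the support of $\tau$ consists only of allocations with coordinates $\le 2m$; by the computation above every such allocation is a best response to $\tau$, and no deviation beats $\pi^*$. Thus the entire proposition reduces to the following: for the given $s^*$ (which by hypothesis satisfies $s^*_k\le 2m$ for all $k$), construct one mixed strategy with uniform marginals that places positive probability on $s^*$.

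I would realize such a $\tau$ as the uniform distribution over the rows of a $(2m+1)\times K$ integer matrix with entries in $\{0,\ldots,2m\}$, each of whose columns is a permutation of $\{0,\ldots,2m\}$ (so each value occurs with probability $1/(2m+1)$ at every battlefield, i.e.\ uniform marginals), each of whose rows sums to $N=Km$ (feasibility), and one of whose rows equals $s^*$. A natural starting point is the cyclic array $s^{(t)}_k=(s^*_k+t)\bmod(2m+1)$ for $t=0,\ldots,2m$: its columns are already permutations and its row $t=0$ equals $s^*$, but the remaining rows need not sum to $Km$ (only the grand total is guaranteed correct). It then remains to repair the row sums by transposing entries within columns among the rows $t=1,\ldots,2m$, an operation that preserves both the column-permutation property and the fixed row $s^*$.

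The hard part will be exactly this rebalancing: proving that equal row sums can actually be attained. This is false for arbitrary column multisets, so the argument must use that every column here carries the near-symmetric, fine-grained multiset $\{0,\ldots,2m\}\setminus\{s^*_k\}$. I would establish it either by an exchange argument---repeatedly moving value from a row whose sum exceeds $Km$ to a row below $Km$ through a column in which their entries can be swapped, choosing the swap so that total imbalance strictly decreases without overshooting (the unit gaps between available values making this possible)---or, more cleanly, by encoding the balanced completion as an integral feasible flow and invoking a Hall/max-flow argument. Confirming that this balancing always succeeds and terminates with all row sums exactly equal to $Km$ is the delicate technical core on which the whole proof rests.
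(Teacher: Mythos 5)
Your reduction is sound and matches the paper's own logic: against any strategy with uniform marginals, every pure strategy with all coordinates at most $2m$ earns exactly $\pi^{\ast}=K\cdot\frac{m+\alpha/2}{2m+1}$ and no pure strategy earns more (this is Lemma~A.1 of the paper), so the proposition reduces to exhibiting one uniform-marginal mixed strategy whose support contains the given $s^{\ast}$. (One small slip: for $\alpha=0$ the best responses are not \emph{precisely} the allocations with coordinates $\leq 2m$ --- bidding $2m+1$ somewhere is then also optimal --- but you only need one inclusion, so this is harmless.) The genuine gap is exactly the step you defer as ``the delicate technical core'': you never prove that the cyclic array can be rebalanced, and no proof along the lines you sketch can succeed, because your construction nowhere uses Assumption~1 ($K$ even), while the existence claim is false without it. Concretely, take $K=3$, $N=3$ (so $m=1$) and $s^{\ast}=(1,1,1)^{\prime}$. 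Any pure strategy in the support of a uniform-marginal mixed strategy has entries in $\{0,1,2\}$ and sums to $3$, hence is either $(1,1,1)^{\prime}$ or a permutation of $(0,1,2)^{\prime}$. If $(1,1,1)^{\prime}$ has probability $p$, the expected number of battlefields receiving a bid of exactly $1$ is $3p+(1-p)=1+2p$, whereas uniform marginals force it to equal $3\cdot\frac{1}{3}=1$; hence $p=0$. In your matrix language: after fixing the row $(1,1,1)$, every column's remaining multiset is $\{0,2\}$, so all remaining row sums are even and can never equal $3$. So the ``rebalancing lemma'' in the generality you attempt is false; any correct completion argument must invoke the parity of $K$, and even for even $K$ it is not supplied by your exchange/flow sketch --- it is an open piece of work, not a detail.

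For comparison, the paper avoids this problem entirely and its argument is three lines. It does not seek a $(2m+1)$-row ``Latin'' support at all (your target, a support of size $2m+1$, is much stronger than what the proposition needs). Instead it starts from the uniform distribution over the product-form set $S_{1}$, in which each pair of battlefields splits $2m$ as $(t_{j},2m-t_{j})$ independently across the $K/2$ pairs; this has uniform marginals, hence is an equilibrium strategy. Then it performs a single marginal-preserving swap: the two support points $(s_{1},2m-s_{1},\ldots,s_{K-1},2m-s_{K-1})^{\prime}$ and $(2m-s_{2},s_{2},\ldots,2m-s_{K},s_{K})^{\prime}$ are replaced by $s^{\ast}=(s_{1},s_{2},\ldots,s_{K})^{\prime}$ and $(2m-s_{2},2m-s_{1},\ldots,2m-s_{K},2m-s_{K-1})^{\prime}$. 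Battlefield by battlefield, each pair contributes the same multiset of bids before and after the swap, both new vectors are budget-feasible, and $s^{\ast}$ now lies in the support. This is where Assumption~1 does its work; if you want to rescue your construction, you would have to prove your completion lemma for even $K$ (e.g., by pairing columns), which amounts to re-deriving the paper's pairing trick in a harder setting.
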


\begin{proof}
    See Subsection 4.2. 
\end{proof}

\noindent Thus, every pure strategy that does not allocate an excessive number of resources to an individual battlefield is part of some mixed-strategy Nash equilibrium.

In the Arad-Rubinstein game $\mathcal{B}_{0}(120,6)$, every pure strategy that assigns at most 40 soldiers to each of the battlefields is part of some mixed strategy Nash equilibrium. This observation illustrates a drawback of the Nash equilibrium concept for the analysis of Colonel Blotto games with practical relevance. The set of equilibrium predictions is simply very large. We will come back to this issue in the next section.

Proposition 2 relates to observations made by Tukey (\citeyear{tukey1949problem}) saying that ``there are good strategies in which a given player
either (i) sends out no units, (ii) sends out more than half of some kind
of unit, or (iii) sends units to more than half of the available sites.''\footnote{These general observations hold, in particular, for an asymmetric version of the Colonel Blotto game discussed in McDonald and Tukey (\citeyear{mcdonald1949colonel}).} Interpreting ``good'' as appearing in the support of an equilibrium strategy, it is not hard to see that Proposition 2 implies conditions (i) and (iii) under the assumptions of the present paper. An illustration of the possibility of condition (ii) will be given later in the paper (see Example \ref{ex:3}).

\bigskip
\noindent\textit{4.2 Proof of Proposition 2}

\noindent To understand why Proposition~\ref{prop:universality} is true, suppose that both players uniformly randomize over the set of pure strategies
\begin{equation*}
S_{1}=\left\{ \left( 
\begin{matrix}
s_{1} \\ 
2m-s_{1} \\ 
\vdots \\ 
s_{L} \\ 
2m-s_{L}%
\end{matrix}%
\right) :s_{1},\ldots ,s_{L}\in \{0,1,\ldots ,2m\}\right\} .
\end{equation*}%
Clearly, this strategy induces uniform marginals.
But this implies, by the discussion following Proposition 1, that both players are actually using an equilibrium strategy.\footnote{However, compared to Proposition 1, the correlation of the $L$ uniform distributions, one for each pair of battlefields, has been dropped.} 

Let $s$\textit{\ }be any pure strategy such that\textit{\ }$s_{k}\leq 2m$ for all $k\in\{1,\ldots,K\}$.
Then, it suffices to replace the two pure strategies%
\begin{equation*}
\left( 
\begin{matrix}
s_{1} \\ 
2m-s_{1} \\ 
\vdots \\ 
s_{K-1} \\ 
2m-s_{K-1}%
\end{matrix}%
\right) \text{, }\left( 
\begin{matrix}
2m-s_{2} \\ 
s_{2} \\ 
\vdots \\ 
2m-s_{K} \\ 
s_{K}%
\end{matrix}%
\right)
\end{equation*}%
in the support of the mixed strategy $\widehat{\sigma }$ by 
\begin{equation}
\left( 
\begin{matrix}
s_{1} \\ 
s_{2} \\ 
\vdots \\ 
s_{K-1} \\ 
s_{K}%
\end{matrix}%
\right) \text{, }\left( 
\begin{matrix}
2m-s_{2} \\ 
2m-s_{1} \\ 
\vdots \\ 
2m-s_{K} \\ 
2m-s_{K-1}%
\end{matrix}%
\right) \text{,}\notag
\end{equation}
respectively. The marginals do not change, and hence, we have found an equilibrium strategy in which $s$ is played with positive probability. This concludes the argument.

\bigskip
\noindent\textit{4.3 Strategies that are never ``good''}

\noindent In analogy to Proposition 2, one may ask what type of pure strategies are never used in any equilibrium. To address this question, we identify the maximum loss of efficiency that is feasible with modified tie-breaking. Then, we derive conditions on pure strategies that make them render an expected payoff too low to correspond to the maximum efficiency loss. Proceeding along these lines, we show that any strategy that focuses on too few battlefields will never be part of any mixed-strategy Nash equilibrium. 


\begin{proposition}\label{prop:unused_strategies}
Impose Assumptions 1 and 2, and let $\alpha \in [0,1)$. Then, any pure strategy that allocates the resource to strictly less than 
\begin{equation*}
    K^{\ast} = 
    \frac{NK}{2N+K}
\left(1-\frac{\alpha}{2-\alpha}
\right)
\end{equation*} 
battlefields is never part of any equilibrium of $\mathcal{B}_{\alpha}(N,K)$.
\end{proposition}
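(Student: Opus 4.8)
The plan is to sandwich the payoff that a concentrated pure strategy can earn between an upper bound fixed by the number of battlefields it occupies and a lower bound on the equilibrium payoff, and then to check that these two bounds cross exactly at $K^{\ast}$. First I would establish that, in any Nash equilibrium of $\mathcal{B}_{\alpha}(N,K)$, each player's payoff is at least $\pi^{\ast}=\frac{K(m+\alpha/2)}{2m+1}$. I would obtain this as a security (maxmin) level: by Proposition 1 a strategy inducing uniform marginals is available under Assumptions 1 and 2, and against such a strategy I would compute the guaranteed payoff directly. The key computation is that if the opponent bids at most $2m$ on every battlefield, the player earns exactly $\pi^{\ast}$ in total, whereas overbidding (placing more than $2m$ on some battlefields in order to zero them out) wastes budget and, as a short case analysis parametrized by the number of overbid battlefields shows, can only raise the player's payoff. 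Hence the minimum over opponent replies equals $\pi^{\ast}$, so the uniform-marginals strategy guarantees $\pi^{\ast}$, and every equilibrium payoff is at least $\pi^{\ast}$.

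Next I would bound from above the payoff of a pure strategy $s$ that places positive resources on only $t$ battlefields. On each of the $t$ occupied battlefields the contribution to $\pi^{i}(s,\sigma^{-i})$ is at most $1$, while on each of the remaining $K-t$ battlefields the player bids $0$ and can at best tie, namely only when the opponent also bids $0$, contributing at most $\frac{\alpha}{2}$ (using $\alpha\le 2$). This yields the bound $\pi^{i}(s,\sigma^{-i}) \le t + (K-t)\frac{\alpha}{2}$.

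Finally I would combine the two estimates. If $s$ lies in the support of some equilibrium strategy of player $i$, then $s$ is a best reply, so $\pi^{i}(s,\sigma^{-i})$ equals the equilibrium payoff and is therefore at least $\pi^{\ast}$. Substituting the upper bound gives $t + (K-t)\frac{\alpha}{2}\ge \pi^{\ast}$, that is, $t\ge \frac{\pi^{\ast}-K\alpha/2}{1-\alpha/2}$. A routine simplification, using $m=N/K$, $2m+1=(2N+K)/K$, and the identity $\pi^{\ast}-\frac{K\alpha}{2}=\frac{Km(1-\alpha)}{2m+1}$, shows that the right-hand side equals exactly $K^{\ast}=\frac{NK}{2N+K}\bigl(1-\frac{\alpha}{2-\alpha}\bigr)$. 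Taking the contrapositive, any $s$ occupying fewer than $K^{\ast}$ battlefields cannot be part of any equilibrium. The hypothesis $\alpha\in[0,1)$ enters only to make $K^{\ast}$ positive, since for $\alpha\ge 1$ the asserted bound is nonpositive and the claim is vacuous.

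The step I expect to be the main obstacle is the security-level computation: one must verify carefully that no opponent reply drives the player below $\pi^{\ast}$, in particular treating the case in which the opponent overbids on several battlefields simultaneously and showing that this slackens rather than tightens the player's guarantee. By contrast, the support-size bound and the algebraic identification of the threshold with $K^{\ast}$ are routine.
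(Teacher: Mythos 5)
Your proposal is correct and takes essentially the same route as the paper's proof: both compare the upper bound $t + (K-t)\frac{\alpha}{2}$ on the payoff of a pure strategy occupying $t$ battlefields with the security level $\pi^{\ast}$ delivered by a uniform-marginals strategy, where the key step in both is that opponent overbidding only creates slack (the per-battlefield payoff is the maximum of zero and a linear expression whose sum is exactly $\pi^{\ast}$). The only difference is cosmetic: the paper phrases the conclusion as the uniform-marginals strategy being a strictly better reply than $s$ against every opponent pure strategy, while you phrase it as the equilibrium payoff being at least the maxmin value $\pi^{\ast}$, which rests on the identical computations and the identical algebraic identification of the threshold $K^{\ast}$.
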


\begin{proof}
    See the Appendix. 
\end{proof}
\noindent 
How strong is the conclusion of Proposition~\ref{prop:unused_strategies} in specific games? For the Arad-Rubinstein game, 
\begin{equation}
K^{\ast}= 
\frac{120\cdot 6}{2\cdot 120 +6} 
= 2.9268.    \notag
\end{equation}
This means that every pure strategy that assigns a positive number of resources to less than three battlefields is never a part of any equilibrium.

\begin{corollary}
    In the Arad-Rubinstein game, any pure strategy that allocates the resource to just one or two battlefields is never part of any equilibrium.
\end{corollary}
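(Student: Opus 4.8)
The plan is to derive Corollary 3 as an immediate specialization of Proposition 3 to the parameters of the Arad-Rubinstein game. Recall that the Arad-Rubinstein game is $\mathcal{B}_0(120,6)$, so $N=120$, $K=6$, and $\alpha=0$. Since $\alpha=0\in[0,1)$ and Assumptions 1 and 2 hold in this game (as already verified in the excerpt: $K=6$ is even and $N=120$ is divisible by $K=6$), Proposition 3 applies directly.

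First I would substitute the parameters into the threshold $K^{\ast}$. With $\alpha=0$, the second factor $\bigl(1-\tfrac{\alpha}{2-\alpha}\bigr)$ collapses to $1$, so the formula reduces to $K^{\ast}=\frac{NK}{2N+K}$. Plugging in $N=120$ and $K=6$ gives
\begin{equation*}
K^{\ast}=\frac{120\cdot 6}{2\cdot 120+6}=\frac{720}{246}\approx 2.9268,
\end{equation*}
exactly as recorded in the text preceding the corollary. Proposition 3 then asserts that any pure strategy allocating the resource to strictly fewer than $K^{\ast}\approx 2.9268$ battlefields is never part of any equilibrium.

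Next I would translate the real-valued threshold into an integer statement. The number of battlefields receiving a positive allocation is necessarily a nonnegative integer. The condition ``strictly fewer than $2.9268$ battlefields'' is therefore equivalent to ``at most $2$ battlefields,'' since the integers strictly below $2.9268$ are precisely $0,1,2$. A pure strategy placing all $N$ resources on a single battlefield or splitting them across exactly two battlefields uses, respectively, one or two battlefields with positive allocation, and both fall under this bound. Hence every such strategy is excluded from the support of any equilibrium, which is the claimed conclusion.

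This corollary is essentially definitional given Proposition 3, so I do not anticipate any genuine obstacle; the only substantive step is the arithmetic reduction of $K^{\ast}$ at $\alpha=0$ together with the observation that the count of used battlefields is integer-valued, allowing the strict inequality against $2.9268$ to be sharpened to ``one or two.'' The entire proof is a one-line appeal to Proposition 3, and indeed the paper could simply write ``Immediate from Proposition 3.''
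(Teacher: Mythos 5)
Your proposal is correct and matches the paper's own argument exactly: the paper also derives the corollary by substituting $N=120$, $K=6$, $\alpha=0$ into Proposition~3, obtaining $K^{\ast}=\frac{720}{246}\approx 2.9268$, and using integrality of the number of battlefields with positive allocation to conclude that strategies spread over one or two battlefields are never part of any equilibrium. Your additional explicit checks (that Assumptions 1 and 2 hold and that $\alpha=0\in[0,1)$) are sound and only make the same argument slightly more careful.
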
 

\begin{proof}
See the text above. 
\end{proof}
\noindent Thus, pure strategies such as $(120,0,0,0,0,0)^{\prime}$ and $(60,60,0,0,0,0)^{\prime}$ are never good strategies in the example. Regrettably, this leaves a gap to the conclusion of Proposition~\ref{prop:universality}. I.e., we do not know if strategies such as $(60,30,30,0,0,0)^{\prime}$ that assign strictly positive resources over at least three battlefields and more than 40 soldiers to at least one of those are ``good.''\footnote{As $\alpha$ becomes larger, the conclusion of Proposition \ref{prop:unused_strategies} weakens. For instance, the conclusion becomes void in the limit case as $\alpha\rightarrow 1$.}

\newpage
\noindent \textbf{5. Refinements}
\medskip



\noindent Given that the Arad-Rubinstein game is a static game, it is natural to check if dominance relationships between strategies might help to narrow down the set of equilibria (e.g., Kohlberg and Mertens,~\citeyear{kohlberg1986strategic}). 

Recall that a pure strategy $s^i \in S$ for player $i\in\{A,B\}$ is \textit{weakly dominated} by another pure strategy $\widehat{s}^i \in S$ if (i) $\pi^i(s^i,s^{-i})\leq \pi^i(\widehat{s}^i,s^{-i})$ for every pure strategy $s^{-i}\in S$, and (ii) there exists at least one pure strategy $s^{-i}\in S$ for the opponent such that $\pi^i(s^i,s^{-i})< \pi^i(\widehat{s}^i,s^{-i})$. Thus, a pure strategy is weakly dominated by another pure strategy if it never yields a greater payoff than the other strategy, but a strictly lower payoff than the other strategy against at least one pure strategy of the opponent. 

It turns out that the elimination of dominated strategies by pure strategies is entirely ineffective for small $\alpha$.

\begin{proposition}\label{prop:WeakDominance}
Suppose that $\alpha < \frac{2}{K}$. Then, no pure strategy in $\mathcal{B}_{\alpha}(N, K)$ is weakly dominated by any other pure strategy.
\end{proposition}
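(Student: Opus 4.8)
The plan is to prove the contrapositive directly: for any two \emph{distinct} pure strategies $s$ and $\widehat{s}$, I will exhibit a single opponent strategy against which $s$ earns strictly more than $\widehat{s}$. This violates condition~(i) in the definition of weak dominance and shows that $s$ is not weakly dominated by $\widehat{s}$. Since the argument runs over every ordered pair of distinct pure strategies, it handles both directions simultaneously and establishes the proposition.

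The key idea is to test each candidate dominating strategy $\widehat{s}$ against its own mirror image, i.e., to take the opponent strategy $s^{-i}=\widehat{s}$. Playing $\widehat{s}$ against $\widehat{s}$ ties every battlefield and yields the baseline payoff $\frac{\alpha}{2}K$. First I would partition the battlefields according to the comparison of $s$ with $\widehat{s}$: let $W=\{k:s_k>\widehat{s}_k\}$, $E=\{k:s_k=\widehat{s}_k\}$, and $D=\{k:s_k<\widehat{s}_k\}$. Because $s\neq\widehat{s}$ and both allocations sum to $N$, the set $W$ is nonempty (if $s_k\leq\widehat{s}_k$ held everywhere with a strict inequality somewhere, then $\sum_k s_k<N$, a contradiction), and likewise $D$ is nonempty.

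Next I would compute $\pi^i(s,\widehat{s})=|W|+\frac{\alpha}{2}|E|$, since $s$ wins exactly the battlefields in $W$, ties exactly those in $E$, and loses those in $D$. Subtracting the baseline $\pi^i(\widehat{s},\widehat{s})=\frac{\alpha}{2}(|W|+|E|+|D|)$ gives the clean expression $\pi^i(s,\widehat{s})-\pi^i(\widehat{s},\widehat{s})=|W|-\frac{\alpha}{2}(|W|+|D|)$. The final step is the chain of inequalities $\frac{\alpha}{2}(|W|+|D|)\leq\frac{\alpha}{2}K<1\leq|W|$, in which the middle strict inequality is exactly the hypothesis $\alpha<\frac{2}{K}$ and the last uses $|W|\geq1$. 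This yields $\pi^i(s,\widehat{s})>\pi^i(\widehat{s},\widehat{s})$, as required.

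The main obstacle here is conceptual rather than computational: recognizing the right test strategy. Once one sees that matching $\widehat{s}$ against itself isolates exactly the per-battlefield gain $1-\frac{\alpha}{2}$ that $s$ secures on each battlefield in $W$ against the $\frac{\alpha}{2}$ loss it forgoes elsewhere, the threshold $\alpha<\frac{2}{K}$ emerges naturally as the condition guaranteeing that a single winning battlefield outweighs the worst-case accumulated tie losses across all $K$ battlefields. No case analysis on the internal structure of $s$ and $\widehat{s}$ beyond the sign partition $W$, $E$, $D$ is needed.
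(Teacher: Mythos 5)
Your proposal is correct and takes essentially the same approach as the paper: both arguments test the candidate dominating strategy $\widehat{s}$ against itself, note that any distinct $s$ must outbid $\widehat{s}$ on at least one battlefield (by the common budget constraint), and use $\alpha < \frac{2}{K}$ to conclude $\pi^i(s,\widehat{s}) > K\cdot\frac{\alpha}{2} = \pi^i(\widehat{s},\widehat{s})$, contradicting weak dominance. Your explicit $W$, $E$, $D$ partition merely spells out the bookkeeping that the paper states in words.
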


\begin{proof}
See the Appendix.
\end{proof}

\noindent The condition in the proposition is satisfied, in particular, in the Arad-Rubinstein game. There, given that identical choices of pure strategies by the two players lead to ties in all battlefields, $\alpha=0$ implies that the diagonal entries of the payoff matrix are all zero. In contrast, all of the off-diagonal entries of the payoff matrix are positive because at least one battlefield is won by each player if strategies differ. Therefore, no pure strategy is weakly dominated by any other pure strategy if $\alpha=0$. This idea of the proof generalizes in a straightforward way to positive but sufficiently small $\alpha$.\footnote{However, the conclusion of Proposition 4 need not hold for standard tie-breaking. As we are going to show in Section 8, a pure strategy may be weakly dominated by another pure strategy if $\alpha=1$.}

\begin{corollary}
    In the Arad-Rubinstein game, there are no strategies that are weakly dominated by a pure strategy.
\end{corollary}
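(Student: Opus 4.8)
The final statement to prove is Corollary 3: \emph{In the Arad-Rubinstein game, there are no strategies that are weakly dominated by a pure strategy.}

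The plan is to derive this as an immediate consequence of Proposition 4. First I would recall that the Arad-Rubinstein game is precisely $\mathcal{B}_0(120,6)$, so that $\alpha = 0$ and $K = 6$. Then I would check that the hypothesis of Proposition 4, namely $\alpha < \frac{2}{K}$, is satisfied: since $\alpha = 0$ and $\frac{2}{K} = \frac{2}{6} = \frac{1}{3} > 0$, the inequality $0 < \frac{1}{3}$ holds. Having verified the hypothesis, Proposition 4 applies directly and tells us that no pure strategy in $\mathcal{B}_0(120,6)$ is weakly dominated by any other pure strategy, which is exactly the assertion of the corollary.

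There is essentially no obstacle here, since the corollary is a direct specialization of Proposition 4 to the parameter values of the Arad-Rubinstein game. The only thing worth making explicit, if one wanted a fully self-contained argument rather than a one-line deduction, is the intuition given in the discussion following Proposition 4: when $\alpha = 0$, the diagonal entries of the payoff matrix are all zero (two identical pure strategies tie on every battlefield, yielding payoff $0$), while every off-diagonal entry is strictly positive (two distinct feasible allocations must differ on at least one battlefield, and since both sum to $N$, each player wins at least one battlefield). Consequently, for any two distinct pure strategies $s^i$ and $\widehat{s}^i$, testing the candidate dominator $\widehat{s}^i$ against the opponent strategy $s^{-i} = \widehat{s}^i$ gives $\pi^i(\widehat{s}^i, \widehat{s}^i) = 0$, whereas $\pi^i(s^i, \widehat{s}^i) > 0$, so condition (i) of weak dominance fails. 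This rules out weak dominance of any pure strategy by any other pure strategy.

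Since the proof of Proposition 4 already establishes the general statement and the preceding discussion has verified that $\alpha = 0$ falls within its scope, I would simply write that the claim follows from Proposition 4 applied to $\mathcal{B}_0(120,6)$, for which $\alpha = 0 < \frac{1}{3} = \frac{2}{K}$.
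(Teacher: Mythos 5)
Your proposal is correct and matches the paper's own proof, which simply states that the corollary is immediate from Proposition \ref{prop:WeakDominance}; your explicit check that $\alpha = 0 < \tfrac{2}{K} = \tfrac{1}{3}$ verifies the hypothesis exactly as intended. The additional self-contained argument you sketch (zero diagonal, strictly positive off-diagonal payoffs) is also sound and coincides with the discussion the paper gives after Proposition \ref{prop:WeakDominance}.
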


\begin{proof}
    Immediate from Proposition \ref{prop:WeakDominance}.
\end{proof}


\bigskip\bigskip
\noindent \textbf{6. Adaptive learning}
\medskip

\noindent We have seen above that the traditional game-theoretic analysis of the Arad-Rubinstein game is bound to remain inconclusive. Although a small number of pure strategies could be categorized as ``bad,'' the concept of Nash equilibrium, even after applying standard refinements, is not sufficiently stringent to derive a meaningful reference point for empirical work. In this section, we therefore approach the problem of making a theory-led prediction from a different perspective, viz.\ by determining the implications of long-run adaptive learning. 

The biggest obstacle to a simulation of long-run adaptive learning in a Colonel Blotto game is the size of the strategy space. The number of pure strategies in the Arad-Rubinstein game is \begin{equation}
    \vert S \vert = \binom{125}{5} = 234^{\prime}531^{\prime}275\text{.} \notag
\end{equation}
To be able to obtain results within a reasonable time frame, we decided to exploit the symmetry of the game. Technically, this amounts to considering the set of pure strategies in the corresponding Colonel Lotto game. Thereby, the size of the strategy space reduces to $436^{\prime}140$ pure strategies.\footnote{Details on the computation of the number of strategies are provided in the Appendix.} 

For adaptive learning, we assumed that players follow fictitious play (Brown, \citeyear{brown1949some}; Robinson, \citeyear{robinson1951iterative}).\footnote{For the computation of equilibrium, more efficient algorithms are available (see, e.g., Ahmadinejad et al.,~\citeyear{ahmadinejad2019duels}). However, our aim here is the simulation of a learning process.} Thus, after an initial period of play, each player considers the empirical frequency distribution of prior play as the best predictor for future play. In our simulation, players interacted over 50 million rounds. The learning algorithm has been implemented in Matlab. The computations were conducted on a \textit{NEMA cluster}, which is a high-performance SGI Altix UV2000 system, equipped with 4TBs of main memory and 96 physical CPU cores. The operating system was Unix. \medskip


\begin{figure}[h]
    \centering
    \includegraphics[width=0.85\linewidth]{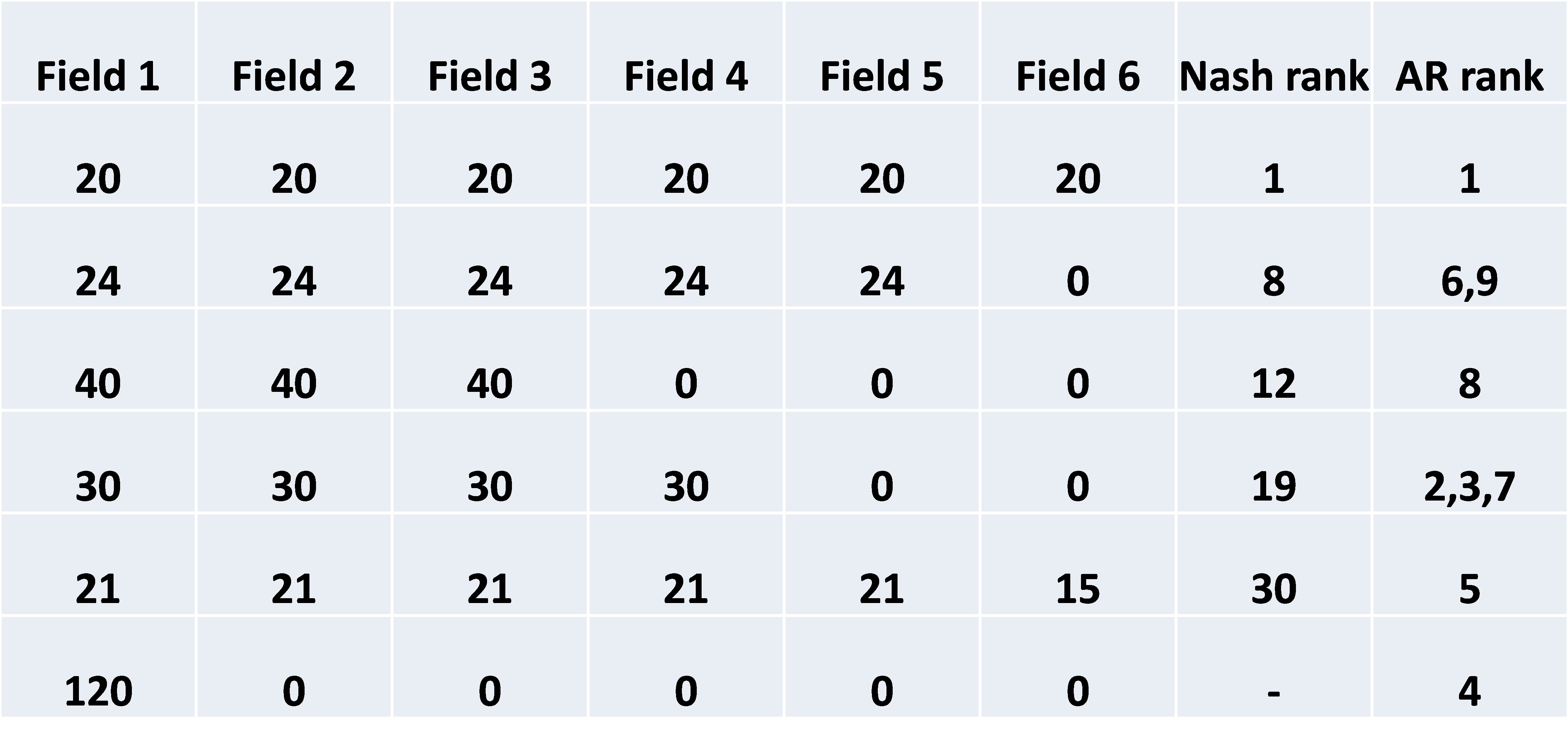}
    \caption{Rank-order analysis}
    \label{fig:enter-label}
\end{figure}


The
support of the learned mixed strategy in the Colonel Lotto game consisted of $\vert S_0\vert =11^{\prime}008$ pure strategies. Table I shows the nine most likely pure strategies in
the learned mixed strategy and compares their respective rank in the learned strategy (column ``Nash rank'') to the rank found by Arad and
Rubinstein (\citeyear{arad_rubinstein}, Table 9). As can be seen, the rankings are related.
Specifically, eight out of the top-9 ordered strategies appearing in the
Arad-Rubinstein data are among the top-30 unordered Nash strategies (the
exception being the rather odd strategy $(120,0,0,0,0,0)^{\prime}$). Moreover, the modal
strategy $(20,20,20,20,20,20)^{\prime}$ is identical between the two rankings. 
As for Arad and Rubinstein's (\citeyear{arad_rubinstein}) \textquotedblleft winning
strategy\textquotedblright\ $(31,31,31,23,2,2)^{\prime}$, it appears at position 47 in the
creation of new best responses during the fictitious-play process and later
ends up at Nash rank 272. 
These observations intuitively square with the
iterative reasoning explanation, and suggest that further inquiry of the relationship between learning and iterative reasoning in multidimensional decision problems might be worth to be pursued.

\bigskip\bigskip
\noindent \textbf{7. Extensions}

\medskip
\noindent This section offers several extensions. We first discuss the case of an odd number of battlefields (Subsection 7.1), then the case where $n$ is not divisible by $K$ (Subsection 7.2), next payoff-inequivalent equilibria (Subsection 7.3), and finally the possibility of pure strategy Nash equilibria (Subsection 7.4).    

\bigskip 
\noindent \textit{7.1 Odd number of battlefields}

\noindent If the number of battlefields is odd so that Assumption 1 is violated, then it is no longer feasible to partition the set of battlefields into pairs. As a result, the equilibrium analysis is complicated substantially. However, using a result of Dziubiński~(\citeyear{dziubinski_2}), the conclusion of Proposition 1 can be shown to hold even if Assumption 1 is dropped. Specifically, even if $K$ is odd, there still exists a mixed strategy that induces uniform marginals on every battlefield. As space limitations make it impossible to replicate the original arguments, we confine ourselves to an illustrative example.\footnote{Further background on this example is provided in the Appendix.}

\begin{example} 
\label{ex:3} 
Consider the Colonel Blotto game $\mathcal{B}_{\alpha}(6,3)$, where $\alpha\in[0,2]$. Then, the mixed strategy $\sigma^i$ given by 
\begin{equation}
\sigma^{i}\left(s^i\right) = 
    \begin{cases}
       \; \dfrac{1}{10} &\text{ if $s^i \in S_a$,} \\
        \\
        \; \dfrac{1}{20} &\text{ if $s^i \in S_b$,} \\
        \\
        \;\;0 &\text{ otherwise,}
    \end{cases}\notag   
\end{equation}
\noindent where
\begin{eqnarray}
     S_a &=&
\left\{ 
\left(
\begin{matrix}
2 \\
2 \\
2
\end{matrix}%
\right),
\left(
\begin{matrix}
3 \\
3 \\
0
\end{matrix}%
\right),
\left(
\begin{matrix}
3 \\
0 \\
3
\end{matrix}%
\right),
\left(
\begin{matrix}
0 \\
3 \\
3
\end{matrix}%
\right),
\left(
\begin{matrix}
4 \\
1 \\
1
\end{matrix}%
\right),
\left(
\begin{matrix}
1 \\
4 \\
1
\end{matrix}%
\right),
\left(
\begin{matrix}
1 \\
1 \\
4
\end{matrix}%
\right)
\right\}
\text{,} \notag
\\
     S_b &=&
\left\{ 
\left(
\begin{matrix}
4 \\
2 \\
0
\end{matrix}%
\right),
\left(
\begin{matrix}
4 \\
0 \\
2
\end{matrix}%
\right),
\left(
\begin{matrix}
2 \\
4 \\
0
\end{matrix}%
\right),
\left(
\begin{matrix}
2 \\
0 \\
4
\end{matrix}%
\right),
\left(
\begin{matrix}
0 \\
4 \\
2
\end{matrix}%
\right),
\left(
\begin{matrix}
0 \\
2 \\
4
\end{matrix}%
\right)
\right\}\text{.}\notag
\end{eqnarray}
 induces uniform marginals and is a symmetric equilibrium strategy.
\end{example}

\noindent Generalizing the construction underlying Example 3, we obtain the following result.

\begin{proposition}\label{prop:Ass1}
The conclusion of Proposition 1 continues to be true if Assumption 1 is dropped.
\end{proposition}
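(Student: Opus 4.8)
The plan is to isolate the part of the argument behind Proposition~\ref{prop:player_symm_NE} that actually drives the equilibrium property and to observe that it never uses the parity of $K$. Indeed, the discussion following Proposition~\ref{prop:player_symm_NE} establishes a self-contained criterion: if the opponent plays any feasible mixed strategy inducing uniform marginals $U^m$ on every battlefield, then on each battlefield every bid in $\{0,1,\ldots,2m\}$ wins with the same probability $\frac{m}{2m+1}$ and ties with the same probability $\frac{1}{2m+1}$. Hence a player is indifferent among all allocations supported on $\{0,\ldots,2m\}$ that respect the budget, and the no-overbidding comparison (a unit moved onto a battlefield above $2m$ gains at most $\frac{1-\alpha}{2m+1}$ but costs $\frac{1}{2m+1}$ elsewhere) rules out the only remaining class of deviations as soon as $\alpha\geq 0$. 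Since $\alpha\in[0,2]$ by Assumption~3, any feasible symmetric strategy inducing uniform marginals is a symmetric equilibrium, and the per-battlefield payoff $\frac{m+\alpha/2}{2m+1}$ yields the asserted value $\pi^{\ast}=K\cdot\frac{m+\alpha/2}{2m+1}$. None of this invokes Assumption~1.

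Consequently, the entire content of the proposition reduces to a feasibility statement: for odd $K$ (with $K\mid N$ by Assumption~2), there exists a feasible symmetric strategy inducing uniform marginals $U^m$ on all $K$ battlefields. I would obtain this by a block construction. Since $K$ is odd, $K-3$ is even, so I partition the battlefields into $(K-3)/2$ pairs together with one residual triple. Each pair receives a total of $2m$ units split uniformly, exactly as in Proposition~\ref{prop:player_symm_NE}, which already induces $U^m$ on those battlefields. What remains is a joint distribution on the triple, supported on $\{0,\ldots,2m\}^3$, with coordinate sum $3m$ and all three marginals equal to $U^m$. Example~\ref{ex:3} exhibits precisely such a distribution for $m=2$; I would generalize it to arbitrary $m$ by passing to the centered coordinates $x=s-m\in\{-m,\ldots,m\}$ (so the constraint becomes $x+y+z=0$ on a symmetric support) and assembling the distribution from permutation- and rotation-orbits of a small family of zero-sum triples, with weights chosen so that the three coordinate marginals flatten to the constant $\frac{1}{2m+1}$. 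Alternatively, and more economically, the feasibility of $U^m$ as a marginal of some budget-feasible joint distribution for any $K$ dividing $N$ follows directly from the characterization of feasible marginals in discrete Colonel Blotto games due to Dziubiński~(\citeyear{dziubinski_2}); invoking that result bypasses the explicit triple construction altogether.

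I expect the feasibility/construction step to be the genuine obstacle, while the reduction in the first paragraph is essentially bookkeeping. The difficulty is that a fixed coordinate sum and uniform marginals are in tension: even the continuous analogue, where one asks for $(X,Y,Z)$ with $X+Y+Z$ fixed and each marginal uniform, fails for the naive ``take two coordinates uniform, solve for the third'' recipe, because the zero-sum section of the cube has a strictly non-uniform marginal. Flattening the marginals therefore requires a deliberate combinatorial design rather than an independence argument, and this is exactly the role played by the two-orbit structure ($S_a$ and $S_b$) in Example~\ref{ex:3}. My proof would make that design explicit for all $m$ (or, as noted, defer to Dziubiński~(\citeyear{dziubinski_2})), after which the criterion of the first paragraph delivers the equilibrium and its payoff immediately, completing the extension of Proposition~\ref{prop:player_symm_NE} to odd $K$.
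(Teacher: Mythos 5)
Your proposal is, in substance, the paper's own proof: the paper splits the argument into exactly your two steps --- Lemma~A.\ref{lem_app:NE_marginals}, which states that uniform marginals on both sides yield a Nash equilibrium under Assumptions 2 and 3 alone (no parity of $K$ enters), and Lemma~A.\ref{lem:dziubinski}, which invokes Dziubi\'nski (2017, Proposition~2) for the existence of a feasible strategy with marginal $U^m$ on every battlefield when $K$ is odd. So the ``more economical'' variant you offer is precisely what the paper does, and it is complete.

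Two caveats are worth recording. First, your justification of the indifference criterion contains a false intermediate claim: against uniform marginals, a bid $b\in\{0,\ldots,2m\}$ wins with probability $\frac{b}{2m+1}$, not with a bid-independent probability $\frac{m}{2m+1}$; only the tie probability $\frac{1}{2m+1}$ is constant across bids. Indifference among budget-respecting allocations supported on $\{0,\ldots,2m\}^K$ follows instead from linearity: battlefield $k$ pays $\frac{s_k}{2m+1}+\frac{\alpha}{2}\cdot\frac{1}{2m+1}$, so the sum is pinned down by the budget $N$. This is how Lemma~A.\ref{lem_app:NE_marginals} argues, and your own no-overbidding computation (removing a unit costs $\frac{1}{2m+1}$ elsewhere) already relies on the correct linear fact rather than on your constant-win-probability claim, so the slip is local and repairable. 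Second, the pairs-plus-one-triple construction is a plan, not a proof: for arbitrary $m$ you never exhibit the weights that flatten the triple's marginals, so on its own it would leave a genuine gap (as you concede). Because you explicitly defer to Dziubi\'nski's feasibility result --- exactly as the paper does --- the proposal as a whole goes through.
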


\begin{proof}
    See the Appendix.
\end{proof}


\medskip
\noindent \textit{7.2 $N$ is not divisible by $K$}

\noindent Assumption 2, which has been assumed in the main strand of the analysis, requires that $N$ is divisible by $K$. We can relax this assumption somewhat, as illustrated by the following example 

\begin{example} Impose Assumption 3. Then, in $\mathcal{B}_{\alpha}(6,4)$, the mixed strategy that randomizes uniformly over the set
\begin{equation}
    S_{\sigma^i} = \left\{
    \left(
    \begin{matrix}
        0 \\
        3 \\
        0 \\
        3 
    \end{matrix}
    \right), 
        \left(
    \begin{matrix}
        1 \\
        2 \\
        1 \\
        2 
    \end{matrix}
    \right), 
    \left(
    \begin{matrix}
        2 \\
        1 \\
        2 \\
        1 
    \end{matrix}
    \right), 
    \left(
    \begin{matrix}
        3 \\
        0 \\
        3 \\
        0 
    \end{matrix}
    \right) 
    \right\},\notag
\end{equation}
is a symmetric equilibrium strategy.
\end{example}
\noindent It is not hard to convince oneself that the logic of the example, or equivalently, the proof of Proposition 1, extends to any game in which $K=2\cdot L$ is even and $N$ is divisible by $L$, which is the number of pairs of battlefields.\footnote{We have not attempted to go beyond that case, however.}

\bigskip
\noindent \textit{7.3 Payoff-inequivalent equilibria}

\noindent In the constant-sum model, all equilibria yield the same payoff. This is not the case with flexible tie-breaking, however. To understand why, let $U^m_{\text{O}}$ and $U^m_{\text{E}}$
denote uniform marginals on the odd and even numbers between $0$ and $2m$, respectively. As shown by Hart (\citeyear{hart}, Thm.~7), any pair of strategies that induce marginals in the convex hull of $\{U^m_{\text{O}},U^m_{\text{E}}\}$ is an equilibrium in $\mathcal{B}_{1}(N,K)$. In particular, any strategy profile $\sigma = (\sigma^i, \sigma^{-i})$, where  $\sigma^i_k = U^m_{\text{O}}$ and $\sigma^{-i}_k = U^m_{\text{E}}$ for every battlefield $k$, is an \textit{asymmetric} equilibrium in $\mathcal{B}_{1}(N,K)$. But any such profile remains an equilibrium for $\alpha \in [0,1]$, because inefficient tie-breaking makes deviations less attractive. Moreover, given the absence of ties, equilibrium payoffs are $K/2$ for both players, which is different from the expression in Proposition 1. Similarly, every mixed strategy $\sigma^i$, where  $\sigma^i_k = U^m_{\text{O}}$ or $\sigma^i_k = U^m_{\text{E}}$ for every battlefield $k$ is a \textit{symmetric} equilibrium strategy in $\mathcal{B}_{1}(N,K)$. Since deviations make ties less likely, these profiles remain equilibria for $\alpha \in [1,2]$.

\bigskip
\noindent \textit{7.4 Equilibria in pure strategies}

\noindent Pure-strategy Nash equilibria, both symmetric and asymmetric, are feasible if parameters are outside of the usual range. For instance, if the payoffs from ties are sufficiently high (i.e., if $\alpha$ is close to or exceeds 2), then the Colonel Blotto game transforms into a coordination game with numerous \textit{symmetric} pure strategy equilibria.

\begin{proposition}
\label{prop:PSNE}
If $\alpha \geq \frac{2 \cdot (K - 1)}{K}$, then every pure strategy $s^i\in S$ is a symmetric Nash equilibrium strategy in $\mathcal{B}_{\alpha}(N,K)$. 
\end{proposition}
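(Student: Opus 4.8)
The plan is to verify directly that when both players choose the same pure strategy $s^i$, no unilateral deviation is profitable; since the game is symmetric, it suffices to rule out a profitable deviation for a single player. The first step is to record the on-path payoff: if both players play $s^i$, every battlefield is tied, so each player earns $\pi^i(s^i,s^i)=K\cdot\frac{\alpha}{2}$.

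Next I would analyze an arbitrary deviation to $\widehat{s}^i\neq s^i$ against the opponent's $s^i$. Partition the battlefields into the set $W$ where $\widehat{s}^i_k>s^i_k$, the set $T$ where $\widehat{s}^i_k=s^i_k$, and the set $L$ where $\widehat{s}^i_k<s^i_k$. Because $\widehat{s}^i$ and $s^i$ carry the same total $N$, both $W$ and $L$ are nonempty, so $|W|\geq 1$ and $|L|\geq 1$. The deviation payoff is $|W|+|T|\cdot\frac{\alpha}{2}$, and substituting $|T|=K-|W|-|L|$ yields a net deviation gain of
\begin{equation*}
\pi^i(\widehat{s}^i,s^i)-\pi^i(s^i,s^i)=|W|\left(1-\frac{\alpha}{2}\right)-|L|\cdot\frac{\alpha}{2}.
\end{equation*}

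The key step is then to maximize this expression over all admissible cardinalities subject to $|W|\geq 1$, $|L|\geq 1$, and $|W|+|L|\leq K$. When $\alpha\geq 2$ the coefficient $1-\frac{\alpha}{2}$ is nonpositive, so the gain is trivially at most zero. When $\alpha<2$ the gain is increasing in $|W|$ and decreasing in $|L|$, hence maximized at the extreme point $|W|=K-1$, $|L|=1$ (which forces $|T|=0$), where it equals $(K-1)-\frac{\alpha K}{2}$. This is nonpositive precisely when $\alpha\geq\frac{2(K-1)}{K}$, which is the hypothesis, so the argument concludes.

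I do not expect a serious obstacle; the proof is a finite-case comparison. The only point requiring care is that the bound is obtained by maximizing over the \emph{abstract} cardinality triples $(|W|,|T|,|L|)$ rather than over realizable strategies, so I would emphasize that feasibility of the extremal deviation need not be checked: since the gain is nonpositive for every admissible triple, it is \emph{a fortiori} nonpositive for every actual deviation. It is also worth noting that the threshold is tight, since the extremal partition makes the inequality bind, which explains the exact form of the stated bound.
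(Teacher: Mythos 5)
Your proof is correct and follows essentially the same route as the paper's: both compare the all-tie payoff $K\cdot\frac{\alpha}{2}$ against the best possible deviation payoff, using the fact that a deviator with the same budget must lose at least one battlefield, which yields exactly the threshold $K-1\leq K\cdot\frac{\alpha}{2}$. Your explicit $W/T/L$ decomposition with the case split at $\alpha=2$ is in fact slightly more careful than the paper's one-line bound (the paper asserts the deviation payoff is at most $K-1$, which literally fails when $\alpha>2$ since ties then pay more than wins, although the conclusion still holds there — your argument covers that case cleanly).
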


\begin{proof}
See the Appendix.
\end{proof}

\noindent The pathological outcome indicated by Proposition \ref{prop:PSNE} in cases where $\alpha$ is excessively large provides support for our Assumption 3, a premise upheld throughout the main analysis.

\textit{Asymmetric} pure strategy equilibria may emerge when the number of resources, $N$, is small relative to the number of battlefields (and Assumption 3 is in place). Specifically, if $2N\leq K$, adversaries can easily avoid conflict by dividing the set of battlefields between them.

\bigskip\bigskip
\noindent \textbf{8. Supplementary section: Implications for the standard model}

\noindent Our analysis has also led to new insights for the Colonel Blotto game with standard tie-breaking. These concern the characterization of the set of ``good strategies (Subsection 8.1), a refinement concept (Subsection 8.2), and the effectiveness of weak dominance (Subsection 8.3). 

\bigskip
\noindent\textit{8.1 ``Good'' strategies in the standard model}

\noindent Combining Proposition 2 with a result by Dziubi\'nski (2017), the conclusion can be considerably sharpened in the constant-sum case. 

\begin{corollary}\label{cor:4}
Impose Assumptions 1 and 2. Then, an arbitrary pure strategy $s^i$ is used with positive probability in some equilibrium of $\mathcal{B}_{1}(N,K)$ if and only if $s^i_k \leq \frac{2N}{K}$ for every battlefield $k$.
\end{corollary}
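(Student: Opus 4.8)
The plan is to prove the two directions of the equivalence separately, noting that the forward implication is already in hand while the converse is the substantive new content.

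For the ``if'' direction, suppose $s^i_k \le \frac{2N}{K} = 2m$ for every battlefield $k$. Since $\alpha = 1 \in [0,2]$, Assumption 3 holds, so Proposition 2 applies verbatim and yields that $s^i$ is used with positive probability in some equilibrium of $\mathcal{B}_1(N,K)$. No further work is needed here.

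For the ``only if'' direction I would exploit the fact that, with $\alpha = 1$, the game $\mathcal{B}_1(N,K)$ is constant-sum: on every battlefield exactly one unit of payoff is distributed, so $\pi^i + \pi^{-i} = K$ for every strategy profile, and by symmetry the value is $K/2$ (consistent with $\pi^* = K/2$ from Proposition 1 at $\alpha=1$). The idea is to use the uniform-marginal equilibrium furnished by Proposition 1 as a fixed ``test'' opponent. Against an opponent whose marginal on each battlefield is $U^m$, the payoff of any pure strategy $s^i$ decomposes battlefield by battlefield and depends only on the opponent's marginals; a direct computation gives a per-battlefield contribution of $\frac{2 s^i_k + 1}{2(2m+1)}$ whenever $s^i_k \le 2m$, but a contribution of exactly $1$ (the winning probability saturates) as soon as $s^i_k \ge 2m+1$. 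Summing the former expression over all battlefields and using $\sum_k s^i_k = N = Km$ returns precisely $\frac{2N+K}{2(2m+1)} = \frac{K}{2}$, whereas each battlefield on which $s^i_k \ge 2m+1$ replaces a term that would have been at least $\frac{4m+3}{2(2m+1)}$ by the smaller value $1 = \frac{4m+2}{2(2m+1)}$, thereby strictly lowering the total by at least $\frac{1}{2(2m+1)}$ per overbid battlefield.

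The decisive step is then the constant-sum indifference argument. In a constant-sum game, the set of equilibria is the product of the two players' sets of optimal strategies, and every pure strategy in the support of an optimal strategy must be a best response to every optimal strategy of the opponent, hence must attain the value $K/2$ against the uniform-marginal equilibrium. Since any pure strategy that overbids (i.e., has $s^i_k > 2m$ for some $k$) earns strictly less than $K/2$ against that opponent, such a strategy cannot lie in the support of any equilibrium strategy; this is the contrapositive of the ``only if'' claim. Equivalently, one may invoke the characterization of the set of optimal marginal distributions due to Dziubiński (2017), which confines every optimal marginal to the support $\{0,1,\ldots,2m\}$ and yields the same conclusion. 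The main obstacle I anticipate is conceptual rather than computational: the per-battlefield payoff computation is routine, so the crux is the passage from ``is not a best response against one particular optimal opponent'' to ``is absent from the support of every equilibrium strategy.'' This requires the constant-sum structure (value uniqueness, interchangeability of optimal strategies, and the best-response property of supports), which is exactly the place where either the zero-sum machinery or Dziubiński's (2017) marginal characterization must be brought to bear; it is also the only point where the hypothesis $\alpha = 1$, rather than general $\alpha \in [0,2]$, is essential.
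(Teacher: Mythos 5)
Your proposal is correct and follows essentially the same route as the paper's own proof: the ``if'' direction is delegated to Proposition 2, and the ``only if'' direction uses interchangeability of optimal strategies in the constant-sum game to conclude that any pure strategy in the support of an equilibrium strategy must be a best response to the uniform-marginal strategy of Proposition 1, against which an overbidding strategy falls strictly short of the value $K/2$ (the paper's strict version of inequality (\ref{inq})). The paper, like you, also notes Dziubi\'nski (2017) as an alternative justification, so there is nothing to add.
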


\begin{proof} See the Appendix.     \end{proof}

\bigskip
\noindent\textit{8.2 Refinement}

\noindent Let $\sigma _{\text{even}}$ and $\sigma _{\text{odd}}$ denote mixed
strategies with marginals uniform on the even and odd integers within $%
\{0,\ldots ,2m\}$, respectively. As shown by 
Hart (2008), under Assumptions 1 and 2, (i) such strategies exist
and (ii) any convex combination of such strategies is an equilibrium
strategy in $B_{1}(N,K)$.

\begin{proposition}
\label{prop:robust}
    Impose Assumptions 1 through 3. Then, the only type of strategy that remains an
equilibrium strategy for all values of $\alpha$ in a neighborhood
of $\alpha =1$ has uniform marginals.   
\end{proposition}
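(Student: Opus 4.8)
The plan is to reduce everything to the induced marginals and then to exploit that, for a symmetric profile, all equilibrium conditions are affine in $\alpha$, so that requiring them to hold on an interval around $\alpha=1$ forces their slopes and intercepts to match separately. Write $p_{k,j}$ for the probability that battlefield $k$ receives the bid $j$ under $\sigma^i$, put $G_k(j-1)=\sum_{j'<j}p_{k,j'}$, and define the marginal value of placing $j$ units on battlefield $k$ by $v^\alpha_k(j)=G_k(j-1)+\tfrac{\alpha}{2}p_{k,j}$, with unit increments $\Delta v^\alpha_k(j)=v^\alpha_k(j)-v^\alpha_k(j-1)=(1-\tfrac{\alpha}{2})p_{k,j-1}+\tfrac{\alpha}{2}p_{k,j}$. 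Since payoffs depend only on marginals, $\sigma^i$ is a symmetric equilibrium strategy precisely when every pure strategy in its support maximizes $\sum_k v^\alpha_k(s_k)$ subject to $\sum_k s_k=N$. Summing the per-battlefield payoff $\tfrac12+\tfrac{\alpha-1}{2}\sum_j p_{k,j}^2$ shows that the symmetric equilibrium payoff equals the best-response value $V(\alpha)=\tfrac{K}{2}+\tfrac{\alpha-1}{2}T$, where $T=\sum_k\sum_j p_{k,j}^2$; for uniform marginals this reproduces the value in Proposition 1.

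The engine of the argument is that an affine relation in $\alpha$ that holds on a neighborhood of $1$ matches coefficientwise. Applying this to the value: for every support strategy $s$ the affine map $\alpha\mapsto\sum_k v^\alpha_k(s_k)=\sum_k G_k(s_k-1)+\tfrac{\alpha}{2}\sum_k p_{k,s_k}$ must coincide with $V(\alpha)$ on the whole neighborhood, so matching slopes gives $\sum_k p_{k,s_k}=T$ and matching intercepts gives $\sum_k G_k(s_k-1)=\tfrac{K}{2}-\tfrac{T}{2}$, for every $s$ in the support. Likewise the no-profitable-transfer inequalities $\Delta v^\alpha_k(s_k)\ge\Delta v^\alpha_\ell(s_\ell+1)$, being affine in $\alpha$ and valid throughout the neighborhood, must hold at $\alpha=1$ with compatible slopes; whenever a unit transfer connects two support strategies the inequality sharpens to an equality of affine functions, forcing the corresponding probabilities to coincide.

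I would then use these relations to show that each marginal is uniform on $\{0,\ldots,2m\}$, in two steps. First, the support of each marginal is a contiguous block on which $p_{k,\cdot}$ is constant: comparing the increment identities at neighboring interior levels $i$ and $i+1$, each an affine identity in $\alpha$ holding on the neighborhood, forces $p_{k,i-1}=p_{k,i}=p_{k,i+1}$, so the probabilities cannot vary and the block admits no gaps. The common interior increment then equals $1/w_k$ with $w_k$ the block width, and since this is the marginal value of the last unit allocated — a quantity common across battlefields at an optimum — all widths agree, $w_k=w$. Second, a positive lower endpoint is incompatible with equilibrium near $\alpha=1$: the least valuable unit of such a block has value $\tfrac{\alpha}{2}\cdot\tfrac1w$, and diverting it to another battlefield's interior margin, worth $\tfrac1w$, yields a strict profit $(1-\tfrac{\alpha}{2})\tfrac1w>0$ once $\alpha<2$. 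Hence each block begins at $0$, and the budget identity $\sum_k\mathbb{E}[s_k]=K\cdot\tfrac{w-1}{2}=N=mK$ forces $w=2m+1$, i.e.\ uniform marginals on $\{0,\ldots,2m\}$ for every battlefield $k$.

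The main obstacle is precisely this structural step. The clean threshold characterization of best responses presumes concavity of $v^\alpha_k$, which fails exactly at a block whose lower endpoint is positive (there the increment jumps up from $\tfrac{\alpha}{2}p_{k,a_k}$ to the interior value), and the boundary reallocation above must be realized by genuinely available support strategies, so the delicate point is to guarantee that the relevant interiors are hit simultaneously; the two-sided hypothesis on $\alpha$ is what converts the best-response inequalities into the probability equalities used throughout and, via the boundary analysis, forces the lower endpoint to $0$, restoring concavity and closing the gap. As a cross-check one may instead invoke Hart's characterization that at $\alpha=1$ the symmetric equilibrium marginals lie in the convex hull of $U^m_{\mathrm{O}}$ and $U^m_{\mathrm{E}}$: writing such a marginal as mass $o_k$ on each odd and $e_k$ on each even level, the increments alternate between $(1-\tfrac{\alpha}{2})e_k+\tfrac{\alpha}{2}o_k$ and $(1-\tfrac{\alpha}{2})o_k+\tfrac{\alpha}{2}e_k$, whose difference $(e_k-o_k)(1-\alpha)$ vanishes for all $\alpha$ iff $o_k=e_k$; since $o_k=e_k$ singles out the uniform marginal $U^m$, any $o_k\neq e_k$ makes the increments strictly ordered on one side of $\alpha=1$ and destroys the best-response property there, leaving only uniform marginals surviving a full neighborhood.
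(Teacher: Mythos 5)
Your main argument has a genuine gap, and it sits exactly where you flag it. The coefficient-matching machinery legitimately yields two kinds of facts: for pure strategies in the \emph{joint} support, the value $\sum_k v^\alpha_k(s_k)$ equals $V(\alpha)$ identically in $\alpha$, so intercepts and slopes match; for every other pure strategy, only a one-sided comparison with $V(\alpha)$ is available, with slope matching obtainable only when the comparison is tight at $\alpha=1$. Your structural step (``comparing the increment identities at neighboring interior levels \ldots forces $p_{k,i-1}=p_{k,i}=p_{k,i+1}$'') needs per-battlefield identities $\Delta v^\alpha_k(j)=\Delta v^\alpha_\ell(j')$ holding identically in $\alpha$, and these follow from your machinery only if the two pure strategies connected by the relevant unit transfer are \emph{both} best responses at $\alpha=1$. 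Membership of the levels $j$ and $j+1$ in the support of the \emph{marginal} $\sigma^i_k$ does not deliver this: those two levels may be hit by support strategies that differ on every other battlefield, and a joint distribution with prescribed marginals can have a very sparse support. To know which unit transfers preserve optimality at $\alpha=1$ you need the structure of the $\alpha=1$ best-response set, which is essentially the Hart--Dziubi\'nski characterization your route was designed to avoid. As written, the contiguity, constant-height, equal-width, and zero-lower-endpoint claims are asserted rather than proved, so the main route does not close.

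Your ``cross-check,'' by contrast, is essentially the paper's own proof. The paper computes the tie probabilities of the four pairings of $\sigma_{\text{even}}$ and $\sigma_{\text{odd}}$ and observes that, facing a non-uniform parity mixture, a player gains by switching to the parity used more often when $\alpha>1$ and less often when $\alpha<1$; your increment difference $(e_k-o_k)(1-\alpha)$ is the same computation in different notation. To turn it into a complete proof, make three things explicit. First, restricting attention to parity mixtures uses the \emph{necessity} part of the $\alpha=1$ characterization (every equilibrium strategy of $\mathcal{B}_1(N,K)$ has marginals in the convex hull of $U^m_{\mathrm{O}}$ and $U^m_{\mathrm{E}}$); this is Dziubi\'nski's result, while Hart's Theorem 7 gives only sufficiency. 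Second, ``strictly ordered increments destroy the best-response property'' deserves a line: writing $\Delta_O$ and $\Delta_E$ for your odd- and even-level increments, the value of any pure strategy with all bids at most $2m$ equals $\frac{N}{2}(\Delta_O+\Delta_E)$ plus $\frac{1}{2}(\Delta_O-\Delta_E)$ times its number of odd bids, so when $\Delta_O\neq\Delta_E$ only all-odd (or all-even) strategies are best responses; a strict mixture has both parities in its support, and $\sigma_{\text{odd}}$ and $\sigma_{\text{even}}$ each fail on one side of $\alpha=1$. Third, the proposition concerns strategies that are part of \emph{some} equilibrium, including asymmetric ones such as $(\sigma_{\text{even}},\sigma_{\text{odd}})$, which the paper's proof treats and your symmetric-profile framing omits; the same increment computation covers this case, since player $i$'s increments are determined by $-i$'s marginal, but you should say so. With these additions, promote the cross-check to the proof and drop the main route.
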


\begin{proof}
    See the Appendix. 
\end{proof}

\noindent These observations suggest a simple form of equilibrium selection in finite Colonel Blotto games with standard tie-breaking. The fact that the asymmetric equilibrium $(\sigma_\text{even},\sigma_\text{odd})$ breaks down for $\alpha > 1$ is an indication that this equilibrium is not very robust in the standard model of Hart (in addition to being difficult to coordinate upon, given the asymmetry of the strategy profile). Similarly, the fact that $(\sigma_\text{even},\sigma_\text{even})$ and $(\sigma_\text{odd},\sigma_\text{odd})$ equilibria break down for $\alpha < 1$ is an indication that those equilibria are not very robust either, which might explain why, in the standard model, any equilibrium with uniform marginals, i.e., without parity considerations, is intuitively more plausible than any of the other feasible combinations.


\bigskip
\noindent\textit{8.3 Weak dominance}

\noindent The following example shows that weak dominance may actually eliminate strategies in the Colonel Blotto game with standard tie-breaking.  

\begin{example}
\label{ex:Dom}
In $\mathcal{B}_{1}(120,6)$, the pure strategy $(120, 0, 0, 0, 0, 0)^{\prime}$ is weakly dominated by $(115, 1, 1, 1, 1, 1)^{\prime}$. 
\end{example}

\noindent Details can be found in the Appendix. Example 5 shows that the condition on $\alpha$ imposed in Proposition \ref{prop:WeakDominance} cannot be simply dropped.\footnote{The determination of the set of strategies that are iteratively undominated in the Colonel Blotto game with standard tie-breaking is, however, beyond the scope of the present paper.}
\noindent

\bigskip\bigskip
\noindent \textbf{9. Conclusion}
\medskip

\noindent While the equilibrium analysis of discrete Blotto games is mathematically appealing, it falls short in explaining the observations from applied economic research.
Consequently, beyond merely characterizing the set of Nash equilibria, it becomes imperative to pinpoint ``reasonable'' choices within the expansive strategy set. As suggested by our simulation analysis, exploring the limit points of adaptive learning processes might offer a promising avenue to address this issue. However, it may not be the sole viable approach.

\bigskip\bigskip
\noindent \textbf{A. Appendix}
\renewcommand{\theequation}{A.\arabic{equation}}
\setcounter{equation}{0}
\medskip
\newtheoremstyle{dotless}{}{}{\itshape}{}{\bfseries}{}{ }{}
\theoremstyle{dotless}
\newtheorem{lemma_app}{Lemma A.\hspace{-3px}\!}

\noindent This section contains proofs omitted from the body of the paper. 
The following lemma prepares the proof of Proposition~\ref{prop:player_symm_NE}.




\begin{lemma_app} \label{lem_app:NE_marginals}
Impose Assumptions 2 and 3. Let $\sigma=(\sigma^A,\sigma^B)$ be a mixed strategy profile in $\mathcal{B}_\alpha(N,K)$ such that $\sigma^{i}$ induces uniform marginals for $i\in\{A,B\}$. Then, $\sigma$ is a Nash equilibrium.
\end{lemma_app}

\begin{proof}
Suppose that both players use the assumed mixed strategies that induce uniform marginals. Then, the probability of a tie on any given battlefield is $\frac{1}{2m+1}$. Therefore, each player $i$'s
expected payoff from $\sigma=(\sigma^A,\sigma^B)$ equals
\begin{equation}
    \pi^i(\sigma)=\frac{K}{2} + K \cdot \frac{1}{2m+1} \cdot \frac{\alpha - 1}{2} = \frac{K \cdot \left(2N+ \alpha K\right)}{4N+ 2 K}.\notag
\end{equation}
Suppose that player $i\in\{A,B\}$ deviates to a pure strategy $s^{i}\in S$. The mixed strategy $\sigma^{-i}$ allocates every number of resources in $\{0,1,2,\ldots,2m\}$ on every battlefield $k$ with the same probability of $\frac{1}{2m+1}$. If player $i$ assigns strictly more than $2m$ resources to a battlefield, her payoff from that battlefield is $1$. 
If player $i$ assigns weakly less than $2m$ to battlefield $k$, then she overwhelms the opponent on that battlefield with probability $\frac{s_k^{i}}{2m+1}$ and achieves a tie with probability $\frac{1}{2m+1}$.
Hence, player $i$'s expected payoff from battlefield $k$ is 
\begin{equation}
\pi^{i}_k\left(s^{i},\sigma^{-i}\right)=\min\{1, \frac{s_k^{i}}{2m+1} + \frac{\alpha}{2}\cdot\frac{1}{2m+1}\}.\notag   
\end{equation}
It follows that
\begin{align}
    \pi^{i}\left(s^{i},\sigma^{-i}\right) &= \sum_{k=1}^K \pi^{i}_k\left(s^{i},\sigma^{-i}\right) \notag \\
    &\leq
    \sum_{k=1}^K \left(\frac{s_k^{i}}{2m+1} + \frac{\alpha}{2}\cdot\frac{1}{2m+1}\right) \label{inq} \\
    &= \frac{N}{2m+1} + \frac{K \cdot \alpha}{2 \cdot (2m+1)}\notag \\
    &= \frac{K \cdot (2N+ \alpha K)}{4N+2K}.\notag
\end{align}
As both $i$ and $s^{i}$ were arbitrary, no player can raise her payoff by deviating from $\sigma = (\sigma^{A}, \sigma^{B})$. Hence, $\sigma$ is a Nash equilibrium, which completes the proof. 
\end{proof}

\begin{proof}[\textbf Proof of Proposition 1]
The mixed strategy $\sigma^i$ assigns every number of resources $s_k^{i}\in\{0,1,\ldots,2m\}$ to every battlefield $k$ with the same probability $\frac{1}{2m+1}$. Hence, $\sigma^i$  induces uniform marginals. It therefore follows from Lemma~A.\ref{lem_app:NE_marginals} that $\sigma^i$ is a symmetric equilibrium strategy.
\end{proof}

\begin{proof}[Proof of Proposition~\ref{prop:unused_strategies}]
Consider an arbitrary pure strategy $s^i$ that allocates a positive number of units of the resource to $K_+\in\{1,\ldots, K\}$ battlefields and zero units to the remaining battlefields. Then, player $i$'s expected payoff of $s^i$ against an arbitrary pure strategy $s^{-i}$ satisfies the relationship 
\begin{align}
    \pi^i\left( s^i, s^{-i}\right) &\leq K_+ \cdot 1 + (K-K_+) \cdot \frac{\alpha}{2} \notag\\ 
   & = \frac{\alpha}{2}\cdot K+\frac{2-\alpha}{2}\cdot K_+\text{.} \label{v}
\end{align}
Consider now a deviation by player $i$ to some mixed strategy $\sigma^i$ that induces uniform marginals. Proposition 1 provides an example of such a mixed strategy. The following analysis is analog to the proof of Lemma A.\ref{lem_app:NE_marginals}. If player $-i$ assigns strictly more than $2m$ units of the resource to battlefield $k$, then $i$'s payoff from that battlefield is zero. 
If, however, player $-i$ assigns weakly less than $2m$ to battlefield $k$, then player $i$ overwhelms her opponent on that battlefield with probability $\frac{2m-s_k^{-i}}{2m+1}$ and achieves a tie with probability $\frac{1}{2m+1}$.
Hence, player $i$'s expected payoff from battlefield $k$ is 
\begin{equation}
\pi^{i}_k\left(\sigma^{i},s^{-i}\right)=\max\{0, \frac{2m-s_k^{-i}}{2m+1} + \frac{\alpha}{2}\cdot\frac{1}{2m+1}\}.\notag   
\end{equation}
It follows that
\begin{align}
    \pi^{i}\left(\sigma^{i},s^{-i}\right) &= \sum_{k=1}^K \pi^{i}_k\left(\sigma^{i},s^{-i}\right)
    \notag \\
    &\geq
    \sum_{k=1}^K \left(\frac{2m-s_k^{-i}}{2m+1} + \frac{\alpha}{2}\cdot\frac{1}{2m+1}\right) \notag \\
    &= \frac{NK}{2N+K}+
    \frac{\alpha}{2}\cdot
    \frac{K^2}{2N+K}.\label{u}
\end{align}
A straightforward calculation shows that the right-hand side of equation (\ref{u}) strictly exceeds the right-hand side of equation (\ref{v}) if and only if  
\begin{equation}
K_+<K^\ast \equiv
\frac{2}{2-\alpha}
\cdot \frac{NK(1-\alpha)}{2N+K}
\text{.}\notag
\end{equation}
In particular, in that case, $\pi^{i}\left(\sigma^{i},s^{-i}\right)>\pi^{i}_k\left(\sigma^{i},s^{-i}\right)$, for any $s^{-i}\in S$. Thus, $s^i$ is never a best response if $K_+<K^\ast$. This proves the proposition.\end{proof}

\begin{proof}[\textbf Proof of Corollary \ref{cor:4}]
\textit{(if)} Immediate from Proposition 2. \textit{(only if)} This follows from Dziubi\'nski  (2017, Cor.~2). For the reader's convenience, we offer a direct proof. Consider a mixed-strategy equilibrium $\sigma^\ast = (\sigma^{A,\ast},\sigma^{B,\ast})$ in $\mathcal{B}_{1}(N,K)$. To provoke a contradiction, suppose that there is some player $i\in\{A,B\}$ and some pure strategy $s^i$ in the support of $\sigma^{i,\ast}$ such that $s^i_k>2m$ for some battlefield $k$. As equilibrium strategies in two-person constant-sum game are interchangeable (Osborne and Rubinstein, \citeyear{osborne1994course}, p.~23), $\sigma^{i,\ast}$ is a best response also to the mixed strategy $\sigma^{-i}=\sigma^{i}$ identified in Proposition 1. Moreover, since the bid vector $s^i$ is chosen with positive probability in the mixed strategy $\sigma^{i,\ast}$, the pure strategy $s^i$ is likewise a best response to $\sigma^{-i}$. But from $s^i_k>2m$ for some battlefield $k$, inequality (\ref{inq}) in the proof of Lemma A.\ref{lem_app:NE_marginals} is strict, so that that 
\begin{equation}
\pi^{i}\left(s^{i},\sigma^{-i}\right) < \frac{K \cdot (2N+ \alpha K)}{4N+2K} = \pi^{i}\left(\sigma^{i},\sigma^{-i}\right)\notag.   
\end{equation}
Thus, $s^i$ is not a best response to $\sigma^{-i}$ after all. The contradiction proves the assertion. \end{proof}


\begin{proof}[Proof of Proposition~\ref{prop:WeakDominance}] By contradiction. Suppose that $s$ is a pure strategy that is weakly dominated by another pure strategy $\widehat{s}$. In $\mathcal{B}_{\alpha}(N, K)$, the diagonal entries of the payoff matrix correspond to an outcome with $K$ ties and are, therefore, equal to $K\cdot\frac{\alpha}{2}$. In particular, this is the payoff of $\widehat{s}$ against itself. However, as $\widehat{s}$ is necessarily different from $s$, the strategy $s$ bids strictly higher than $\widehat{s}$ on at least one battlefield. Therefore, the payoff of $s$ against $\widehat{s}$ is at least one. Under the assumption made, this is strictly higher than $K\cdot\frac{\alpha}{2}$. The contradiction shows that no pure strategy can be weakly dominated by any other pure strategy. \end{proof}

\noindent The following lemma is used in the proof of Proposition~\ref{prop:Ass1}.

\begin{lemma_app}[\textbf{Dziubi\'nski, 2017}]\label{lem:dziubinski}
Suppose that Assumption 2 holds and that $K$ is odd. Then, there exists a mixed strategy $\sigma^{i}$ such that $\sigma^{i}_k = U^{m}$ for every battlefield $k\in\{1,\ldots, K\}$. 
\end{lemma_app}

\begin{proof}
    See \citeauthor{dziubinski_2}~(\citeyear{dziubinski_2}, Proposition 2).
\end{proof}

\begin{proof}[Proof of Proposition~\ref{prop:Ass1}]
By Lemma~A.\ref{lem:dziubinski}, we find a uniform strategy for each player even if Assumption 1 fails to hold. As Assumption~1 is not imposed in Lemma~A\ref{lem_app:NE_marginals}, the pair of strategies constitutes a Nash equilibrium in $\mathcal{B}_{\alpha}(N,K)$. This proves the claim.
\end{proof}


\begin{proof}[Details on Example 3.] We will construct a \textit{battlefield-symmetric} uniform equilibrium strategy. Let $p_{411}$ etc.~denote the respective probability that a player chooses the pure strategies $(4,1,1)^{\prime}$ etc. Then, accounting for symmetries, $p_{411}=p_{141}=p_{114}$, etc. Moreover, any solution to the system
\begin{equation}
\left( 
\begin{array}{ccccc}
1 & 0 & 0 & 0 & 2 \\ 
0 & 0 & 2 & 2 & 0 \\ 
0 & 1 & 2 & 0 & 2 \\ 
2 & 0 & 2 & 0 & 0 \\ 
0 & 0 & 0 & 1 & 2%
\end{array}%
\right) \left( 
\begin{array}{c}
p_{411} \\ 
p_{222}\\ 
p_{123} \\ 
p_{330} \\ 
p_{420}%
\end{array}%
\right) =\left( 
\begin{array}{c}
\frac{1}{5} \\ 
\frac{1}{5} \\ 
\frac{1}{5} \\ 
\frac{1}{5} \\ 
\frac{1}{5}%
\end{array}%
\right) \notag
\end{equation}
induces uniform marginals. Restricting attention to solutions that yield nonnegative probabilities, the general solution is given by 
\begin{equation}
\left(\begin{array}{c}
p_{411} \\ 
p_{222}\\ 
p_{123} \\ 
p_{330} \\ 
p_{420}%
\end{array}
\right)= (1-\lambda)
\left(\begin{array}{c}
\frac{1}{10} \\ 
\frac{1}{10} \\ 
0\\ 
\frac{1}{10} \\ 
\frac{1}{20}%
\end{array}
\right)+
\lambda
\left(
\begin{array}{c}
\frac{1}{15} \\ 
0\\ 
\frac{1}{30} \\  
\frac{1}{15} \\ 
\frac{1}{15}%
\end{array}
\right)\text{,} \notag
\end{equation}
where $\lambda\in[0,1]$ is arbitrary. The solution shown in the body of the paper corresponds to $\lambda=0$. 
\end{proof}

\begin{proof}[Proof of Proposition \ref{prop:PSNE}]
Consider a symmetric profile $s = (s^i, s^{-i})$ in pure strategies, where $s^i = s^{-i}$. The payoff to both players from $s$ is equal to $K \cdot \frac{\alpha}{2}$. As player $i$ is unable to win on all the battlefields, the payoff for a deviating player $i$ is bounded from above by $K-1$. Hence, $K - 1 \leq K \cdot \frac{\alpha}{2}$, which is equivalent to the inequality in the statement of the proposition. 
\end{proof}

\begin{proof}[Counting the number of strategies] We start by noting that the number of pure strategies does not depend on the tie-breaking rule. In general, as each pure strategy $s^i\in S$ in the Colonel Blotto game $\mathcal{B}_{\alpha}(N,K)$ corresponds uniquely to a finite sequence 
\begin{equation}
\textstyle
\underbrace{1,\ldots,1}_{s_1^i},\ast, \underbrace{1,\ldots,1}_{s_2^i},\ast, \ldots, \ast, \underbrace{1,\ldots,1}_{s_N^i}  \text{,}   \notag
\end{equation}
where each ``1'' stands for a unit of the resource and the ``$\ast$'' is a separator between neighboring battlefields. Given that there are $N$ units of the resource to allocate and $K-1$ separators, the number of pure strategies in $\mathcal{B}_{\alpha}(N, K)$ is 
\begin{equation}
\vert S \vert =\binom{N+K-1}{K-1}.\notag
\end{equation}
The number of pure strategies in the corresponding Colonel Lotto game, i.e., taking account of symmetries between battlefields, is given by $p(N+K,K)$, where $p(N,K)$ denotes the number of partitions of $N$ into exactly $K$ parts. Although a simple formula is unavailable, 
the recursive relationship 
\begin{equation}
    p(N,K)= p(N-1,K-1) + p(N-K,K)\notag
\end{equation} 
with initial conditions $p(N,1)=1$ and $p(N,K)=0$ if $K>N$ allows computing this number in specific examples (Gupta, \citeyear{gupta1970partitions}).
\end{proof}

\begin{proof}[Proof of Proposition \ref{prop:robust}] The probability of a tie at any given battlefield is zero in $(\sigma _{\text{even}%
},\sigma _{\text{odd}})$, and $(\sigma _{\text{odd}},\sigma _{\text{even}})$%
; similarly, the probability of a tie is $\frac{m+1}{2m+1}$ in $(\sigma _{%
\text{even}},\sigma _{\text{even}})$, and $\frac{m}{2m+1}$ in $(\sigma _{%
\text{odd}},\sigma _{\text{odd}})$. If player $-i$'s marginal is not uniform and $\alpha >1$ ($\alpha <1$), then
player $i$ has the incentive to deviate to the parity that is used more
(less) often by $-i$. \end{proof}

\begin{proof}[Details on Example 5.] Let $s^i=(120,0,0,0,0,0)^{\prime}$. We start by claiming that $s^i$ never results in a payoff higher than 3 against any pure strategy $s^{-i}$. Indeed, if $s^{-i}_1=120$, then player $i$'s payoff is 3. If, however, $s^{-i}_1<120$, then player $i$ wins battlefield 1, loses at least one of the other five battlefields, and achieves at most a tie on the remaining battlefields. Again, therefore, the payoff for player $i$ from choosing $s^i$ cannot exceed 3, which proves the claim. We continue and claim that $s^i$ is weakly dominated by $\widehat{s}^i=(115,1,1,1,1,1)^{\prime}$. For this, we check the conditions in the definition of weak dominance. (i) To provoke a contradiction, suppose that $s^i$ yields a payoff strictly higher than $\widehat{s}^i$ against a given pure strategy $s^{-i}$. Then, $s^i$ necessarily yields a payoff strictly higher than $\widehat{s}^i$ from the first battlefield, meaning that $s^{-i}_1\geq  115$. Hence, $s^{-i}$ assigns at most 5 units of the resource to the remaining 5 battlefields. But each of those units can reduce the payoff from strategy $\widehat{s}^i$ on battlefields 2 through 6 by at most 0.5 utils. Therefore, in the situation considered, $\widehat{s}^i$ yields at least 2.5 units from battlefields 2 through 6. The exact value of 2.5 is only achieved when $s^{-i}_1 = 115$, hence, the resulting payoff is at least 3, as the first battlefield is also tied. When $s^{-i}_1 > 115$, the payoff from battlefields 2 through 6 is at least 3. As follows, player $i$'s payoff from using $\widehat{s}^i$ is at least 3 for every opponent's strategy where $s^{-i} \geq 115$, in conflict with the presumption made above. (ii) Suppose that $s^{-i} = (119,1,0,0,0,0)^{\prime}$. Then, $\pi^i(s^i,s^{-i})=3$, while $\pi^i(\widehat{s}^i,s^{-i})=4.5$. This proves the claim. \end{proof}

\newpage
\renewcommand\refname{\normalsize References}
\bibliographystyle{apalike}
\setlength{\bibhang}{0pt}
\bibliography{paper}

\begin{thebibliography}{}

\bibitem[Ahmadinejad et~al., 2019]{ahmadinejad2019duels}
Ahmadinejad, A., Dehghani, S., Hajiaghayi, M., Lucier, B., Mahini, H., and Seddighin, S. (2019).
\newblock From duels to battlefields: {C}omputing equilibria of {B}lotto and other games.
\newblock {\em Mathematics of Operations Research}, 44(4):1304--1325.

\bibitem[Arad and Rubinstein, 2012]{arad_rubinstein}
Arad, A. and Rubinstein, A. (2012).
\newblock Multi-dimensional iterative reasoning in action: The case of the {C}olonel {B}lotto game.
\newblock {\em Journal of Economic Behavior \& Organization}, 84:571–585.

\bibitem[Aspect and Ewerhart, 2022]{aspect2022colonel}
Aspect, L. and Ewerhart, C. (2022).
\newblock Colonel {B}lotto games with a head start.
\newblock {\em Available at SSRN 4204082}.

\bibitem[Avrahami and Kareev, 2009]{avrahami2009weak}
Avrahami, J. and Kareev, Y. (2009).
\newblock Do the weak stand a chance? {D}istribution of resources in a competitive environment.
\newblock {\em Cognitive Science}, 33(5):940--950.

\bibitem[Avrahami et~al., 2014]{avrahami2014allocation}
Avrahami, J., Kareev, Y., Todd, P.~M., and Silverman, B. (2014).
\newblock Allocation of resources in asymmetric competitions: How do the weak maintain a chance of winning?
\newblock {\em Journal of Economic Psychology}, 42:161--174.

\bibitem[Borel, 1921]{borel}
Borel, {\'{E}}. (1921).
\newblock La {T}h{\'e}orie du {J}eu et les {{\'E}}quations {I}nt{\'e}grales {\`a} {N}oyau {S}ym{\'e}trique.
\newblock {\em Comptes Rendus de l\textquoteright Acad{\'e}mie des Sciences}, 173:1304--1308.
\newblock {T}ranslated by {S}avage {L. J.}, {T}he theory of play and integral equations with skew symmetric kernels. {E}conometrica 21 (1953) 97--100.

\bibitem[Brown, 1949]{brown1949some}
Brown, G.~W. (1949).
\newblock {\em Some notes on computation of games solutions}.
\newblock Rand Corporation.

\bibitem[Chowdhury et~al., 2013]{chowdhury2013experimental}
Chowdhury, S.~M., Kovenock, D., and Sheremeta, R.~M. (2013).
\newblock An experimental investigation of {C}olonel {B}lotto games.
\newblock {\em Economic Theory}, 52:833--861.

\bibitem[Dechenaux et~al., 2006]{dechenaux2006caps}
Dechenaux, E., Kovenock, D., and Lugovskyy, V. (2006).
\newblock Caps on bidding in all-pay auctions: Comments on the experiments of {A}.~{R}apoport and {W}.~{A}maldoss.
\newblock {\em Journal of Economic Behavior \& Organization}, 61(2):276--283.

\bibitem[Dziubiński, 2013]{dziubinski_1}
Dziubiński, M. (2013).
\newblock {Non-symmetric discrete General Lotto games}.
\newblock {\em International Journal of Game Theory}, 42(4):801--833.

\bibitem[Dziubiński, 2017]{dziubinski_2}
Dziubiński, M. (2017).
\newblock The spectrum of equilibria for the {C}olonel {B}lotto and the {C}olonel {L}otto games.
\newblock In Bil{\`o}, V. and Flammini, M., editors, {\em Algorithmic Game Theory}, pages 292--306, Cham. Springer International Publishing.

\bibitem[Gupta, 1970]{gupta1970partitions}
Gupta, H. (1970).
\newblock Partitions {--} a survey.
\newblock {\em Journal of Res. of Nat. Bur. Standards-B Math. Sciences B}, 74:1--29.

\bibitem[Hart, 2008]{hart}
Hart, S. (2008).
\newblock Discrete {C}olonel {B}lotto and {G}eneral {L}otto games.
\newblock {\em International Journal of Game Theory}, 36:441--460.

\bibitem[Kohlberg and Mertens, 1986]{kohlberg1986strategic}
Kohlberg, E. and Mertens, J.-F. (1986).
\newblock On the strategic stability of equilibria.
\newblock {\em Econometrica}, 54(5):1003--1037.

\bibitem[Kohli et~al., 2012]{kohli2012colonel}
Kohli, P., Kearns, M., Bachrach, Y., Herbrich, R., Stillwell, D., and Graepel, T. (2012).
\newblock Colonel {B}lotto on {F}acebook: The effect of social relations on strategic interaction.
\newblock In {\em Proceedings of the 4th Annual ACM Web Science Conference}, pages 141--150.

\bibitem[Liang et~al., 2019]{liang}
Liang, D., Wang, Y., Cao, Z., and Yang, X. (2019).
\newblock Colonel {B}lotto games with two battlefields.
\newblock {\em SSRN Electronic Journal}.

\bibitem[McDonald and Tukey, 1949]{mcdonald1949colonel}
McDonald, J. and Tukey, J. (1949).
\newblock Colonel {B}lotto: {A} problem of military strategy.
\newblock {\em Fortune, June}, page 102.

\bibitem[Montero et~al., 2016]{montero2016majoritarian}
Montero, M., Possajennikov, A., Sefton, M., and Turocy, T.~L. (2016).
\newblock Majoritarian {B}lotto contests with asymmetric battlefields: {A}n experiment on apex games.
\newblock {\em Economic Theory}, 61:55--89.

\bibitem[Osborne and Rubinstein, 1994]{osborne1994course}
Osborne, M.~J. and Rubinstein, A. (1994).
\newblock {\em A {C}ourse in {G}ame {T}heory}.
\newblock MIT Press.

\bibitem[Rapoport and Amaldoss, 2000]{rapoport2000mixed}
Rapoport, A. and Amaldoss, W. (2000).
\newblock Mixed strategies and iterative elimination of strongly dominated strategies: An experimental investigation of states of knowledge.
\newblock {\em Journal of Economic Behavior \& Organization}, 42(4):483--521.

\bibitem[Roberson, 2006]{roberson}
Roberson, B. (2006).
\newblock The {C}olonel {B}lotto game.
\newblock {\em Economic Theory}, 29(1):1--24.

\bibitem[Robinson, 1951]{robinson1951iterative}
Robinson, J. (1951).
\newblock An iterative method of solving a game.
\newblock {\em Annals of Mathematics}, 54(2):296--301.

\bibitem[Tukey, 1949]{tukey1949problem}
Tukey, J.~W. (1949).
\newblock A problem of strategy.
\newblock {\em Econometrica}, 17(1):73.

\end{thebibliography}

\end{document}